\RequirePackage{tikz}
\documentclass[sn-mathphys, autoref, thm-restate, pdflatex]{sn-jnl}
\usepackage{tda}

\title{A Faithful Discretization of the Verbose Persistent Homology Transform}

\author*[1,2]{\fnm{Brittany Terese}
\sur{Fasy}{\orcid{0000-0003-1908-0154}}}\email{brittany.fasy@montana.edu}
\author[3]{\fnm{Samuel} \sur{Micka}{\orcid{0000-0003-3814-9727}}
}\email{smicka@western.edu}
\author[1,4]{\fnm{David L.} \sur{Millman}{\orcid{0000-0003-4112-3026}}}\email{dave@blocky.rocks}
\author*[5]{\fnm{Anna} \sur{Schenfisch}{\orcid{0000-0003-2546-5333}}}\email{a.k.schenfisch@tue.nl}
\author[1]{\fnm{Lucia} \sur{Williams}{\orcid{0000-0003-3785-0247}}}\email{luciawilliams@montana.edu}

\affil[1]{\orgdiv{School of Computing}, \orgname{Montana State
U}}
\affil[2]{\orgdiv{Dept. of Mathematical Sciences}, \orgname{Montana State U}}
\affil[3]{\orgdiv{Mathematics \& Computer Science Dept.}, \orgname{Western
Colorado U}}
\affil[4]{\orgdiv{Blocky Inc}}
\affil[5]{\orgdiv{Eindhoven University of Technology}}

\keywords{immersed simplicial complexes,
persistence diagrams, reconstruction, shape representation}

\subclass{Primary 52C45, 51M20. Secondary 57Q99, 46M20.}

\begin{document}

    \abstract{
    The persistent homology transform (PHT) represents a shape with a multiset of
persistence diagrams parameterized by the sphere of directions in the ambient
space.  In this work, we describe a finite set of diagrams that discretize the
PHT such that it faithfully represents the underlying shape.  We provide a
discretization that is exponential in the dimension of the shape.
Moreover, we show that this discretization is
stable with respect to various perturbations and we provide an
algorithm for computing the discretization.  Our approach relies only on
knowing the heights and dimensions of topological events, which means that it
can be adapted to provide discretizations of other dimension-returning
topological transforms, including the Betti function transform. With mild
alterations, we also adapt our methods to faithfully discretize the Euler
characteristic function transform.

    }

    \maketitle

    \paragraph{Acknowledgements}
    This material is based upon work supported by the National Science Foundation
    under the following grants: CCF 1618605 (BTF, SM), CCF 2046730 (BTF), DBI 1661530 (DLM,
    LW), DGE 1649608 (AS),  DMS~1664858 (BTF, AS), and DMS~1854336
    (BTF).
    All authors thank the CompTaG club at Montana State University as well as
    the students in the Eidgenössische Technische Hochschule Zürich course,
    ``Projects in TDA -- Directional Transforms''
    for their critical reading, 
    thoughtful discussions, and helpful feedback on this work.

    \section{Introduction}\label{sec:intro}
Collections of topological descriptors are both empirically and
theoretically useful for
analysing data and differentiating shapes~\cite{giusti2015clique,
lee2017quantifying, rizvi2017single,
lawson2019persistent, tymochko2020using,
wang2019statistical,singh2007topological,bendich2016persistent}.
In \cite{turner2014persistent}, Turner et al.\ define the \emph{persistent
homology transform} (PHT) and Euler characteristic transform (which we refer to
as the \emph{Euler characteristic function transform} (ECFT)),
which map a geometric simplicial complex in
$\R^{d}$ to a family of directional persistence diagrams and Euler
characteristic functions indexed or parameterized by the directions in~$\S^{d-1}$.
They
show that these infinite parameterized families
are \emph{faithful} representations of shape,
meaning that no two distinct shapes have the
same family of directional persistence diagrams or Euler characteristic functions.
A few years later, several research groups independently
observed that there exist finite faithful representations
for various types of simplicial and cubical
complexes~\cite{belton2018learning,belton2019reconstructing,betthauser2018topological,%
ghrist2018euler, curry2018directions,micka2020searching}.
Meanwhile, researchers are already applying the PHT and the closely
related \emph{verbose PHT} (VPHT) to represent various types of
data sets in machine learning and statistical pipelines~\cite{crawford2019predicting,
turner2014persistent, wect2020, betthauser2018topological,
hofer2017constructing, maria2020intrinsic}.

Motivated by the need for a provably faithful discretization of the
VPHT in applications (rather than the current standard of using
uniform sampling, which may or may not be
faithful), Belton et al.\ gave a faithful discretization of
the VPHT using only~$\Theta(n_0^2)$ verbose persistence diagrams for embedded
plane graphs with $n_0$ vertices~\cite[Theorems 15 and
16]{belton2019reconstructing}. In \cite[Theorem 10]{fasy2022efficient}, we
remove the planar condition and improve this bound to
$O(d + n_1 \log n_0)$,\brittany{shouldn't this be a
macro?}\anna{this used to say ``in the current paper,'' but we
meant to reference the CCCG paper (hence the lack of macro). I've updated.}
where $n_1$ is the number of edges in the graph. To show
that these discretizations are indeed faithful,
both~\cite{belton2019reconstructing} and~\cite{fasy2022efficient} use the proof method
of \emph{reconstructing} the
shape from the set of diagrams. If a shape can be unambiguously reconstructed
from a set of diagrams, then the set of diagrams is representative of the shape,
i.e., the discretized PHT is faithful.  This work takes the natural next step of
using reconstruction to give the first explicit faithful discretization of the
PHT for general simplicial complexes.  This method also discretizes
topological transforms in a larger family, including the verbose Betti
function transform (VBFT), as well as the verbose Euler characteristic function
transform (VECFT).

\paragraph*{Our Contribution}
%
We provide sufficient conditions for a faithful discretization of the VPHT (and
other related topological transforms) in \thmref{main}.  While discretizations
emerge as natural extensions to (V)PHT injectivity proofs, those discretizations
are at least exponential in the size of the ambient space and based on
geometric and topological properties of the shape.  In this paper, we use the
sufficient conditions of \thmref{main} to develop \thmref{explicit}, leading
to a discretization exponential in the dimension of the
simplicial complex and an algorithm \brittany{here and the abstract are the
only places that `output-sensitive' appears}\anna{I removed both instances}
that reports the discretization in time dependant on
\brittany{proportional=linear, which i don't think is the case}\anna{Also, it
shouldn't depend on the size of the set, but the size of the complex. I've
updated} the size of the of the simplicial complex.  Moreover, in
\secref{stability}, we show that the discretization is stable with respect to
multiple types of perturbations.  This is the first study of producing a
faithful discretization of the VPHT of an arbitrary simplicial complex in
arbitrary dimension and of studying the stability of that discretization.
\brittany{I think we can merge this paragraph with the one before it.} \anna{I
do like having something with a clear ``Our contribution" heading}

    \section{Background Definitions}\label{sec:preliminary}
We assume that the reader is familiar with homology groups (denoted $H_*$)
and their Betti numbers~(denoted $\beta_*$).
For a more complete discussion on foundational computational topology,
we refer the reader to~\cite{edelsbrunner2010computational,chazal2016structure}.

\subsection{Lower-Star Filtrations and Persistence}\label{ss:tda-defs}

We first review the building blocks for the VPHT:
simplicial complexes, lower-star filtrations, and
(verbose) persistence diagrams.
\paragraph*{General Position}
In what follows, when we say that a point set $V \subset \R^d$ is in
\emph{general position} we mean that, for all $k$ with $1 \leq k
\leq d+1$, every subset of $V$ of size $k$ is affinely independent.
We denote the affine space spanned by $V$ by $\aff{V}$.

\paragraph*{Simplices and Simplicial Complexes}
Let $k,d \in \N$.
A \emph{(geometric)~$k$-simplex} $\sigma$ is the convex hull of a set of
$k+1$ affinely
independent points in $\R^d$, denoted $\sigma=[v_0,v_1, \ldots, v_k]$.
Each of these points is called a \emph{vertex}, and we denote the vertex set
of $\sigma$ by $\verts{\sigma}$; at times,
we use $\sigma$ in place of $\verts{\sigma}$ in places where using
$\verts{\sigma}$ would make equations cumbersome to read.
We call~$k$ the \emph{dimension} of~$\sigma$,
and write~$\dim(\sigma):=k$.
For another simplex $\tau$,
we say that~$\tau$ is a
\emph{face}
of $\sigma$ and $\sigma$ is a coface of $\tau$ if~$\emptyset \neq
\verts{\tau} \subseteq \verts{\sigma}$; we denote this relation
by~$\tau \preceq \sigma$.
If~$\tau \preceq \sigma$ but $\tau \neq \sigma$, then~$\tau$ is called
a proper
face of~$\sigma$, denoted $\tau \prec \sigma$.

A \emph{GP-immersed simplicial complex}~$\simComp$ is a finite set of geometric
simplices such that the vertex set of $\simComp$ is in general
position.\footnote{Note that this is a stronger condition than the usual
notions of immersions; as long as the vertices are in general position,
self-intersections of a GP-immersed complex need not be~transverse.} We topologize
$K$ with the Alexandroff topology.  We denote the set of~$k$-simplices
in~$\simComp$ by $\simComp_k$ and the number of simplices in $\simComp_k$ by
$n_k$.  We let $n=\sum_{k\in\N} n_k$.
The \emph{degree} of ~$v \in K_0$
is the number of one-simplices (edges) that are cofaces of~$v$, and we
denote this
as~$\deg(v)$.

\paragraph*{Filtrations and Persistent Homology}
Let $K$ be a simplicial complex and let~$f \colon \simComp \to \R$ be a
monotonic function, i.e.,
for each pair of simplices~$\tau \prec \sigma \in \simComp$, we
have~$f(
\tau) \leq f(\sigma)$.  Thus, each sublevel set $F_t:=f^{-1}(-\infty,t]$ with~$t\in
\R$
is a simplicial complex.
The parameterized sequence of subcomplexes $\{ F_t \}_{t \in \R}$,
along with inclusions~$F_t  \subseteq F_{t'}$ for all
$t \leq t'$,
is the \emph{sublevel-set filtration of~$K$ with respect to $f$}.
This filtration
realizes at most~$n+1$ distinct complexes: the empty set
and~$F_{f(\sigma)}$ for each $\sigma \in \simComp$.
More generally, a filtration of $K$ is a family of subcomplexes of $K$ indexed
by a poset such that inclusions of the subcomplexes follow the poset order.

One type of filtration of interest is an \emph{index filtration}. This type of
filtration realizes exactly $n+1$ distinct complexes, meaning each subcomplex
of~$K$ includes one more simplex than the previous subcomplex in the
filtration. Thus, we see that an index filtration is equivalent to a total
order on the simplices of $K$. 

Another filtration of interest is the
\emph{lower-star filtration}
of a simplicial complex~$K$ GP-immersed in $\R^d$  with respect to
a direction~$\dir \in \sph^{d-1}$.
For each vertex~$v \in \simComp_0$, the height of $v$ in direction~$\dir$
is given by the dot product,~$\dir \cdot v$.  The lower-star filter function in
direction~$\dir$,  denoted~$h_\dir: \simComp \to \R$,
defines a ``height'' of each simplex in~$\simComp$
where~$h_\dir (\sigma) = \max \{\dir \cdot v ~\vert~v \in \verts{\sigma}\}$,
i.e.,~$h_{\dir}(\sigma)$
is the height of the highest vertex in $\sigma$ with respect to~$\dir$.
The lower-star filtration is the sublevel-set filtration of~$K$ with
respect to~$h_{\dir}$.
Notice that, for~$r,t \in \R$ such that~$r \leq t$,
we have~$h_s^{-1}(-\infty,r]=h_s^{-1}(-\infty,t]$ if and only if no vertex
has height
in the interval~$(r,t]$.
If all vertices are at distinct heights,
there are $n_0$ distinct distinct subcomplexes,
and there
exists an ordering of the vertices~$\{v_1',v_2', \ldots, v_{n_0}'\}$ such that
the complexes realized in the sublevel-set filtration are:
    \[
        \emptyset \subset h_{\dir}^{-1}(-\infty,\dir \cdot v_1'] \subset
        h_{\dir}^{-1}(-\infty, \dir \cdot v_2' ] \subset \cdots \subset
        h_{\dir}^{-1}(-\infty, \dir \cdot v_{n_0}'].
    \]
Let $k \in \N$.
Applying the homology functor to a filtration $\{F_t \}_{t \in \R}$, we obtain the
\emph{persistence~module} $\{H_k(F_t) \}_{t \in \R}$.
Here, we assume that homology is computed using field coefficients (e.g.,~$\Z_2$).
Then, each $H_k(F_t)$ is a
vector space, and, for each $t \leq t'$,
the inclusion of simplicial complexes~$F_t  \subseteq F_{t'}$ induces a linear 
map~$f^{t,t'}_k \colon H_k(F_t) \to H_k(F_{t'})$.
Letting $\beta^{x,y}_k$ denote the rank of~$f^{x,y}_k$,
\begin{equation}\label{eqn:multiplicity}
    \mu_k(x,y)=\beta_k^{x,y-1} - \beta_k^{x,y} - \beta_k^{x-1,y-1} +
    \beta_k^{x-1,y}
\end{equation}
where $(a,b)^m$ denotes~$m$ copies of the point $(a,b)$,
and $\overline{\R} := \R \cup \{ \pm \infty \}$,
we define the~\emph{$k$-dimensional
persistence diagram}, as the following multiset:
\begin{equation}\label{eqn:pd-multiset}
\stdgm{k}{f}:= \{ (x,y)^{\mu_k(x,y)}
    \}_{(x,y) \in \overline{\R}^2}.
\end{equation}
In other words, each~$(x,y)\in \stdgm{k}{f}$ represents a
$k$-dimensional homological
generator~$\alpha$ that is born at $x$ (that is, $[\alpha] \in H(F_x)$ but
$[\alpha] \notin \image(f^{x-\varepsilon,x}_k)$ for any $\varepsilon>0$)
and~dies going into $y$ (that is, $y$ is the smallest index such that there
exists~$[\alpha'] \neq [\alpha] \in H(F_{x})$ with~$[\alpha]=[\alpha']$ in
$H(F_y)$).
The \emph{persistence diagram} is the union of all $k$-dimensional
diagrams:~$\stdgm{}{f}:= \cup_{k \in \Z} \stdgm{k}{f}$.

\paragraph*{Verbose Persistence Diagram}
Since simplices can have the same height in a general filtration, it is possible
that the birth and the death of a $k$-cycle happen simultaneously, in which
case, that cycle is not represented in the persistence diagram.
However, having every simplex ``appear'' in the
persistence diagram is helpful (in particular, the information we use from the
persistence diagram is the height of each simplex).
Thus,
we introduce \emph{verbose persistence diagrams} (\apds).
Note that verbose descriptors are sometimes also called \emph{augmented}
descriptors in the literature.
To define \apds for the sublevel-set filtration of $f \colon K \to \R$,
we start with a total order of the simplices, $\sigma_1, \sigma_2, \ldots,
\sigma_n$, such that
(i) If~$\tau \prec \sigma$, then $\tau$ comes before $\sigma$;
(ii) If~$f(\tau) < f(\sigma)$, then $\tau$ comes before $\sigma$.
The function $f' \colon K \to \R$ defined by~$f'(\sigma_i)=i$
is an \emph{index filter compatible with $f$}; note
that multiple index filters may  be compatible with~$f$.
Defining $F'_j := \{\sigma_i  \mid i \leq
j\}$, the corresponding \emph{index filtration} $F'$
is the nested sequence of subspaces
of~$K$:
    \begin{linenomath}\[
        \emptyset=F'_0 \subset F'_1 \subset F'_2 \subset \ldots \subset F'_n=K.
    \]\end{linenomath}

\begin{restatable}[Verbose Persistence Diagram]{definition}{apddef}\label{def:apd}
    Given a filter $f \colon \simComp \to \R$, let~$f_*$ be a compatible index
    filtration. For $k \in \N$, the \emph{$k$-dimensional verbose persistence
    diagram} is the following multiset:
    \begin{equation}\label{eqn:apd-pts}
        \dgm{k}{f} := \left\{ \big( f(f_*^{-1}(i)), f(f_*^{-1}(j)) \big)
        \right\}_{(i,j) \in \stdgm{k}{f_*}},
    \end{equation}
    where $f(\emptyset):=\infty$.
    The \emph{verbose persistence diagram~(\apd)} of~$f$ is the union of all
    $k$-dimensional verbose persistence
    diagrams:~$\dgm{}{f}:= \cup_{k \in \Z} \dgm{k}{f}$.
\end{restatable}

Since the filter function $f_*$ in the definition above need not be unique, it is
not immediately clear that this definition is well-defined. A
proof that \apds are, in fact, well-defined can be found in
\appendref{welldef}.

The \apd carries more information than the persistence diagram (as the height
of each simplex is a coordinate of a persistence point). Yet, algorithms for
computing \pds compute \apds as an intermediate step; see, e.g., \cite[Chapter
VII.2]{edelsbrunner2010computational}).  The \apd has been used in several
contexts already.  For example, the definition of \pd in McCleary and
Patel~\cite{PHFunc} is the same as our definition of \apd. Usher
and Zhang define verbose barcodes barcodes via the lens of filtered chain
complexes in~\cite{usher2016persistent}.  This perspective is also taken in,
e.g., \cite{memoli2023ephemeral,chacholski2023decomposing}, where instantaneous
or length-zero bars are referred to as ``ephemeral.'' 
In the literature, ``verbose'' is sometimes replaced by the word ``augmented,'' and ``concise'' by ``non-augmented.''

We only use verbose persistence diagrams in this paper, so
we use ``diagram'' as shorthand for ``\apd.''

\begin{definition}[Verbose Persistent Homology Transform]\label{def:pht}
    Given a simplicial complex~$\simComp$ GP-immersed in~$\R^d$, the
    \emph{verbose persistent homology transform (VPHT)} of $\simComp$, denoted
    $VPHT(\simComp)$,
    is the parameterized set of all
    directional diagrams; that is,~$VPHT(\simComp):=\{(\dir,\dgm{}{h_{\dir}} )
    \}_{\dir \in \sph^{d-1}}$.
\end{definition}

Given a set of directions, $S \subset \S^{d-1}$, the restriction of
$VPHT(\simComp)$ to $S$, is:
$$
    \apht{K}{S}:= \{(\dir,\dgm{}{h_s} )\}_{\dir \in S} \subset
    VPHT(K).
$$
We call $\apht{K}{S}$ a \emph{discretization} of the VPHT.
With $S$ large enough (and selected wisely), the hope is that $\apht{\simComp}{S}$
caries the same information as~$VPHT(\simComp)=\apht{\simComp}{\S^{d-1}}$. In other words, we hope that
it is \emph{faithful}, as we define next:

\begin{definition}[Faithful Discretization of $VPHT(K)$]\label{def:faithful}
    Given a simplicial complex~$\simComp$ GP-immersed in~$\R^d$
    and a finite set $S \subset \S^{d-1}$, we say
    that~$\apht{K}{S}$
    is a \emph{faithful discretization of $VPHT(K)$} if, for any
    simplicial complex~$\simComp' \neq \simComp$ GP-immersed in $\R^d$,
    there exists an~$\dir \in S$
    such that~$\dgm{}{h_{\dir}}\neq \dgm{}{h_{\dir}'}$, where~$h_{\dir}$
    and~$h'_{\dir}$ are the lower-star
    filter functions of $K$ and $K'$  with respect to direction~$\dir$,
    respectively.
    That is, no other simplicial
    complex GP-immersed in~$\R^d$ has the same parameterized set of directional diagrams.
\end{definition}

Although this paper is largely written in the language of the VPHT, the results
hold for a large class of topological transforms. Important examples include
the verbose Betti function transform (VBFT) and the verbose Euler characteristic
function transform (VECFT).
Note that Betti functions and Euler characteristic functions are often called
Betti curves and Euler characteristic curves in the literature.

\paragraph{Betti Functions} 
The $k$\th Betti number of a simplicial complex $\simComp$ is the rank of the
$k$-dimensional homology group of $\simComp$, and is denoted ~$\beta_k(\simComp)
= \text{rank}(H_k(\simComp))$.~Measuring this quantity as a filtration
parameter changes gives
rise to the $k$\th Betti function, and the set of $k$\th Betti functions for all~$k \in
\Z$ is
collectively referred to as the \emph{Betti function}.

\begin{definition}[Betti Function (BF) and Verbose
BF]\label{def:aecc}
    Given a filter $f \colon \simComp \to \R$, let $f'$ be a compatible index
    filtration.
    Let $T=\{ t_1, t_2, \ldots, t_{\eta} \}$ be the ordered set of filter
    parameters of $f$ that witness a change in homology.
    The \emph{$k$\th Betti function ($k$\th \bc)} is a step function
    $\stbc_{f,k}: \R \to \Z$
    defined by
    \begin{equation*}\label{eqn:ecc}
        \stbc_{f,k} := \left\{\beta_k(p), \, p \in [t_i, t_{i+1})\right \}.
    \end{equation*}
    Furthermore, we write $\sigma_i \in \simComp_k$
to be the simplex such that $f'(\sigma_i) = i$.   The~\emph{$k$\th verbose Betti function ($k$\th \abc)} is the
    decorated step function~$\augbc_{f,k}: \R \to \Z$ defined by
    \begin{equation*}\label{eqn:abc-pts}
        \augbc_{f,k} = \left \{\beta_k(p), \, p \in [f(\sigma_i), f(\sigma_{j}))
        \right \},
    \end{equation*}
where $(i,j]$ is an interval with constant value in the function
    $\stbc_{f',k}$.
\end{definition}
Note that some $[f(\sigma_i), f(\sigma_{j}))$ are empty,
so the recorded Betti number takes place at a single point rather than an
interval with positive measure. It is these extra points that ``decorate''
the otherwise regularly defined step function.
Then, the \abc represents the \bc as a parameterized count of
$k$-simplices.

We can read off the \bc from the \pd by observing the following equality:
    \begin{equation*}
        \augbcdec{p}{f,k}= \left\lvert
                                \left \{ (a,b) \in \dgm{k}{f} \text{ s.t. } a \leq p
                                \text{ and } b \geq p \right \}
                           \right\rvert
    \end{equation*}
In other words, the \bc is a weaker invariant than the \pd.  Similarly, the
\abc is a weaker invariant than the \apd; we further explore the relationship
between these and other descriptors in~\cite{schenfisch2023faithful}. 
\anna{also cite Ling's dissertation?}
Next, we
observe that \abcs are dimension-returning.

\begin{corollary}[Properties of \abcs]
    Let $f \colon \simComp \to \R$ be a monotonic function.
    For each~$\sigma \in \simComp$, the collection of functions
    $\{\augbc_{f,k} \}_{k \in \Z}$
    records $f(\sigma)$ and dimension of $\sigma$.
\end{corollary}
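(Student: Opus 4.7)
The plan is to show that each simplex $\sigma \in \simComp$ produces a uniquely identifiable decoration in some $\augbc_{f,k}$, from which both $f(\sigma)$ and $\dim(\sigma)$ can be read off. The backbone of my argument will be the standard incremental fact, via Mayer--Vietoris, already used in the proof of \lemref{count} and in \cite{delfinado1993incremental}: adding a single simplex $\sigma$ to a filtered complex changes exactly one Betti number by exactly one, either incrementing $\beta_{\dim(\sigma)}$ (the \emph{positive} case) or decrementing $\beta_{\dim(\sigma)-1}$ (the \emph{negative} case).

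I would next interpret this change at the level of the augmented Betti curve. Fix a compatible index filter $f'$ with $f'(\sigma) = i$, so that $\sigma = \sigma_i$. In the positive case, $\stbc_{f',\dim(\sigma)}$ jumps upward between its constant pieces on either side of index $i$, and by the definition of the augmented Betti curve, $\augbc_{f,\dim(\sigma)}$ then contains a (possibly zero-length) interval whose left endpoint equals $f(\sigma_i) = f(\sigma)$ and whose Betti value exceeds that of the preceding interval by exactly one. In the negative case, the symmetric argument would produce such a decoration at $f(\sigma)$ in $\augbc_{f,\dim(\sigma)-1}$, with the Betti value dropping by one.

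To finish, I would reverse-engineer $\sigma$ from the full collection $\{\augbc_{f,k}\}_{k \in \Z}$: the filter value at the decoration is $f(\sigma)$; if the decoration is an up-jump in dimension $k$, then $\dim(\sigma) = k$, and if it is a down-jump in dimension $k$, then $\dim(\sigma) = k+1$. The main subtlety---and the reason the augmentation matters---is that several simplices can share the same filter value, so the collection must still record each of them as a \emph{separate} event rather than as a single net jump. This is where I expect the hardest bookkeeping to lie, but it is guaranteed by construction: the partition underlying $\augbc_{f,k}$ is indexed by the maximal constant intervals of $\stbc_{f',k}$ in the \emph{index} filtration, one per simplex addition that changes $\beta_k$, so each positive $k$-simplex and each negative $(k+1)$-simplex contributes its own (possibly degenerate) decoration even when multiple events coincide at a single filter value. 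Finally, independence of this assignment from the choice of compatible $f'$ would follow from the same transposition argument as in \lemref{well-defined}.
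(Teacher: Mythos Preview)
Your argument is correct and is precisely the natural unpacking of this corollary. The paper does not supply an explicit proof here: the statement is presented as an immediate observation following the definition of the augmented Betti curve (which, by construction, records $f(\sigma_i)$ at every index $i$ where some $\beta_k$ changes in the compatible index filtration) together with the Mayer--Vietoris fact already used in \lemref{count}. Your identification of up-jumps in dimension $k$ with positive $k$-simplices and down-jumps with negative $(k+1)$-simplices, your handling of coincident filter values via the index-filtration decorations, and your appeal to the transposition argument of \lemref{well-defined} for well-definedness are exactly the details one would fill in, and they align with the reasoning implicit in the paper's treatment.
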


Next, we provide an explicit definition of the VBFT and make our final
observation.

\begin{definition}[Verbose Betti Function Transform]\label{def:bct}
Given a geometric simplicial complex $\simComp$ in $\R^d$, the
    \emph{verbose Betti function
transform (VBFT)} of $\simComp$ is the set of all
    directional \abcs of lower-star filtrations
    over $\simComp$, parameterized by the sphere of directions,~$\sph^{d-1}$.
\end{definition}

Parallel to \defref{faithful}, we consider faithful discretizations of the VBFT.


\paragraph*{Euler Characteristic Functions}
The Euler characteristic of a simplicial complex is the
alternating sum of the number of simplices of different
dimensions:~$\chi(\simComp) = \sum_{i=0}^{\kappa} (-1)^i n_i$, where
we recall that $n_i$ is the number of~$i$-dimensional simplices in $\simComp$
and $\kappa=\dim(\simComp)$.  Given a filtration, the Euler characteristic with
respect to the filtration parameter is known as the Euler characteristic
function~(\ecc).~Formally,

\begin{definition}[Euler Characteristic Function (\ecc) and Verbose
\ecc]\label{def:aecc}
    Given a filter $f \colon \simComp \to \R$, let $f'$ be a compatible index
    filtration.  Let $\{F_i:=f^{-1}(-\infty,t_i]\}_{i=1}^{n}$ be the
    filtration of $K$ corresponding to $f$.
    The \emph{Euler characteristic function} is a step function
    $\stecc_f: \R \to \Z$
    defined by
	\begin{equation}\label{eqn:ecc}
        \steccdec{p}{f} :=
                \sum_{k=0}^{\infty} (-1)^k n_k^{(i)},
    \end{equation}
    where $p \in [t_i,t_{i+1})$ and $n_k^{(i)}$ is the number of $k$-simplices
in
    $F_i$.
    Furthermore, the \emph{verbose Euler characteristic function (\aecc)} is the
    function~$\augecc_f: \R \to \Z^2$ defined by
    \begin{equation*}\label{eqn:aecc-pts}
        \augeccdec{p}{f} = \left( \sum_{k=0}^{\infty}
                n_{2k}^{(i)}, \sum_{k=0}^{\infty} n_{2k+1}^{(i)} \right).
    \end{equation*}
    In other words, the \aecc represents the \ecc as a parameterized count of
    positive (even parity) and negative (odd parity) simplices.
\end{definition}

We can read off the \ecc from the \pd, and the \aecc
from the \apd.  In other words, the \ecc is a weaker invariant than the \pd and
the \aecc is a weaker invariant than the \apd;
see~\cite{schenfisch2023faithful} for more on the relationships between these
and other descriptors.

\begin{remark}[From Persistence Diagrams to ECFs]\label{rem:reduction}
    Let $f \colon \simComp \to \R$ be a monotonic function.
    Let $\{ t_i \}_{i=1}^{\eta}$ be the set of event times, and let
    $p \in [t_i,t_{i+1})$.
    Then, the following holds:
    \begin{equation*}
        \steccdec{p}{f}=
            \sum_{k \in \Z}
            \sum_{\substack{(a,b) \in \stdgm{k}{f} \\
                                        a \leq p < b}}
                                (-1)^{k}.
    \end{equation*}
    For each $p \in \R$, let 
	\begin{equation*}
	A_k(p) := \{ (a,b) \in \dgm{k-1}{f} \text{ s.t. }
        b \leq p\}
        \cup
        \{ (a,b) \in \dgm{k}{f} \text{ s.t. } a \leq p\}.
	\end{equation*}
		Then,
        \begin{equation*}
        \augeccdec{p}{f}= \left(
                        \sum_{\substack{k \in \Z \\
                                        k \text{ even}}}
                            \lvert A_k(p) \rvert
                            ,
                        \sum_{\substack{k \in \Z \\
                                        k \text{ odd}}}
                            \lvert A_k(p) \rvert
                        \right).
    \end{equation*}
\end{remark}

\begin{corollary}[Properties of \aeccs]
    Let $f \colon \simComp \to \R$ be a monotonic function.
    For each~$\sigma \in \simComp$, the function $\augecc_f$ records
     $f(\sigma)$ and the parity of $\dim(\sigma)$.
\end{corollary}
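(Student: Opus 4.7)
The plan is to read off the requested data from the jump-discontinuity structure of the step function $\augecc_f \colon \R \to \Z^2$. Concretely, I will show that the pair $(f(\sigma), \dim(\sigma) \bmod 2)$ is encoded in the location and coordinatewise magnitude of a jump of $\augecc_f$, so that the multiset $\{(f(\sigma), \dim(\sigma) \bmod 2)\}_{\sigma \in \simComp}$ can be recovered from the function alone.

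First, I fix an arbitrary $\sigma \in \simComp$ and set $c := f(\sigma)$. Since $\sigma$ is inserted at filtration value $c$, this value coincides with one of the event times, say $c = t_i$, in the filtration appearing in \eqnref{filtration}. For $p \in [t_{i-1}, t_i)$ we have $f^{-1}(-\infty,p] = F_{i-1}$, and for $p \in [t_i,t_{i+1})$ we have $f^{-1}(-\infty,p] = F_i$; hence the set of simplices added across $c$ is exactly $F_i \setminus F_{i-1} = \{\tau \in \simComp : f(\tau) = c\}$, which contains $\sigma$. By monotonicity of $f$, no other simplex is added at any $p$ in the open interval $(t_{i-1}, t_{i+1}) \setminus \{c\}$, so $\augecc_f$ is locally constant around $c$ on both sides.

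Next, I compute the coordinatewise difference $\augecc_f(c^+) - \augecc_f(c^-)$ directly from the definition. Using $n_k^{(i)} - n_k^{(i-1)} = |\{\tau \in F_i \setminus F_{i-1} : \dim(\tau) = k\}|$ and summing over even and odd $k$, the first coordinate of this difference equals the number of even-dimensional simplices added at height $c$ and the second equals the number of odd-dimensional ones. Therefore $\sigma$ contributes a $+1$ to exactly one of these jumps according to the parity of $\dim(\sigma)$, and the horizontal location of the jump recovers $f(\sigma) = c$. Iterating over all jump points of $\augecc_f$ and accumulating these contributions gives the full multiset $\{(f(\sigma), \dim(\sigma) \bmod 2)\}_{\sigma \in \simComp}$, establishing the claim.

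The only subtlety, which I expect to be the main (mild) obstacle, is the case when multiple simplices share the same filtration value $c$: the jump in a given coordinate then records the \emph{count} of same-parity simplices added at $c$ rather than identifying them individually. This is harmless for the statement as worded, because the corollary asserts that $\augecc_f$ records $f(\sigma)$ and the parity of $\dim(\sigma)$ for each $\sigma$ as a recovered (multi)set entry, not that each $\sigma$ is individually labelled; the multiplicity of the jump absorbs this ambiguity cleanly.
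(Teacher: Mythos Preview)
Your argument is correct and matches the paper's own treatment: the paper states this corollary without proof, immediately after the definition of $\augecc_f$ and the remark expressing $\augecc_f(p)$ via the sets $A_k(p)$, so the intended justification is exactly the jump-discontinuity reading you spell out. Your handling of the multiplicity issue (several simplices at the same height) is also the right interpretation of what ``records'' means here.
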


\begin{definition}[Verbose Euler Characteristic Function Transform]\label{def:ecct}
Given a geometric simplicial complex $\simComp$ in $\R^d$, the
    \emph{verbose Euler characteristic function
transform (VECFT)} of $\simComp$ is the set of all
    directional \eccs of lower-star filtrations
    over $\simComp$, parameterized by the sphere of directions,~$\sph^{d-1}$.
\end{definition}

Parallel to \defref{faithful}, we consider faithful discretizations of the VECFT.

\subsection{Tools for Building a Faithful Discretization}\label{ss:tools}
We now give a lemma that relates simplices to points in an \apd, discuss the general
position assumption, and define a tool used in our proofs of faithfulness called
\emph{filtration hyperplanes}. We prove the following lemma in \appendref{count}.

\begin{restatable}[Simplex Count]{lemma}{simpcount}
\label{lem:count}
    Let $\simComp$ be a simplicial complex,
    $k \in \N$ and~$c \in \R$.
    Let~$f \colon \simComp \to \R$ be a monotonic function.
    Then, the~$k$-dimensional simplices of~$\simComp$ with a function value of
    $c$ are in one-to-one
    correspondence with the points in the following multiset:
    \linenomath
	\begin{equation}\label{eqn:isimps}
        \left\{ (a,b) \in \dgm{k}{f} \text{ s.t. } a = c \right\}
        \cup
        \left\{ (a,b) \in \dgm{k-1}{f} \text{ s.t. } b = c \right\}.
	\end{equation}
\end{restatable}

Next, we define a structure that helps
build a geometric intuition for several of the proofs that follow.

\begin{definition}[Filtration Hyperplane]
    Let $\dir \in \sph^{d-1}$ be a unit vector, and let~$c \in \R$.
    Let~$\pLine{\dir}{c}$ be the~$(d-1)$-dimensional hyperplane
    that passes through the point~$c\dir\in
    \R^d$ and is perpendicular to $\dir$.  We define the closed half-spaces
above
    and below this hyperplane with respect to direction $\dir$ by
    $\pLineUp{\dir}{c}$ and~$\pLineDown{\dir}{c}$, respectively.

    Let $V$ be a finite set of vertices in $\R^d$ and let $h_s \colon V \to \R$
	be the lower-star filter function with respect to the direction~$s$.
    The
    \emph{filtration hyperplanes of~$V$} are the set of hyperplanes
    \begin{linenomath}
    \[
        \pLines{\dir}{V} := \{ \pLine{\dir}{h_s(v)} \}_{v \in V}.
    \]
    \end{linenomath}
\end{definition}

All hyperplanes in $\pLines{\dir}{V}$ are parallel to each other
and perpendicular to the direction~$\dir$.
Let $K$ be a simplicial complex GP-immersed in $\R^d$.
Since the births in $\dgm{0}{h_s}$ are in one-to-one correspondence with the
vertices of $K$ by \lemref{count},
there is a filtration hyperplane at every height at which a vertex
lies in direction~$\dir$.  In directions where vertices are at the same height,
the filtration hyperplanes are not unique.

By observing intersections of a sufficient number of linearly independent
filtration hyperplanes, we can form a grid of points of intersections, which
will be a crucial tool in our vertex-reconstruction arguments.

\begin{definition}[Filtration Grid]
    \label{def:grid}
    For $n \geq d$, let $s_1, s_2, \ldots s_n \in \sph^{d-1}$ be linearly
    independent and let $P\in \R^d$ be a pointset. We define the
    \emph{filtration grid of~$P$ with respect to $\{s_1, s_2, \ldots, s_n\}$} to
    be the grid of points, $A$, arising from choosing one hyperplane in each set
    $\pLines{s_i}{K_0}$ for $1 \leq i \leq n$. That is, the filtration grid is
    the collection of $n$-way intersections of filtration hyperplanes. Note
    that~$P \subseteq A$ and $\lvert A \rvert \leq \lvert P \rvert ^d$.
\end{definition}

%
Finally, we define the specific types of directions that are the building
blocks of our faithful discretization.  We begin with directions that are
perpendicular to a simplex in a specific way.

\begin{definition}[$P$-Perpendicular]\label{def:perpendicular}
    Let $V \subset P \subset \R^d$ such that $P$ is in general position.  We
	say that a direction~$s \in \S^{d-1}$ is \emph{$P$-perpendicular} to
	$V$ if for all~$u \in P$ and $v \in V$, $\dir \cdot u  = \dir \cdot v$
	if and only if $u \in V$.  In other words, $\dir$ is perpendicular to
	$\aff{V}$ and no other vertex in $P$ is at the same height as the
	vertices in $V$ with respect to $\dir$.
\end{definition}

The next definition describes when a direction is a slight tilt of an
initial direction that is $P$-perpendicular to some $V$, so that a specified subset $W$
(and no other points) pops above~$V \setminus W$.

\begin{definition}[$(P,V,W,s)$-Perturbation]\label{def:perturbation}
    Let $W \subset V \subset P \subset \R^d$ such that $P$ is in general position.
    Let $\dir \in \S^{d-1}$ be $P$-perpendicular to $V$. Then, we say
    a direction~$s_* \in \S^{d-1}$ is a~\emph{$(P,V,W,s)$-perturbation} if
    the following hold:
    \begin{enumerate}[(i)]
        \item The direction $s_*$ is $P$-perpendicular to $\aff{V\setminus
            W}$.\label{stmt:perturbation-perp}\label{stmt:perturbation-firstprop}
        \item The points in $W$ are above $V \setminus W$ with respect
            to~$s_*$.\label{stmt:perturbation-popup}
        \item For all $p \in P \setminus V$, $p$ is
            strictly above (below)
            the height of $V \setminus W$ with respect to $s_*$
            if and only if it is strictly above (below, respectively) $V$
            with respect to
            $\dir$.\label{stmt:perturbation-saveorder}\label{stmt:perturbation-lastprop}
    \end{enumerate}
\end{definition}

If $V$ defines the vertices of a simplex $\sigma$, then we may use $\sigma$ in
place of~$\verts{\sigma}$ in the definition and construction above. Likewise for
$W$.

    \section{The Main Result: A Faithful Discretization of the VPHT}
    \label{sec:discretization}
    \brittany{can we rename this section to fit on one line?}
    \anna{Not really without removing ``Main Result," which I think is helpful
    to orient the reader to what's important}
In this section, we give sufficient properties for a set of
diagrams to faithfully discretize the VPHT of a given simplicial
complex, the proof of which is given in the remainder of the paper.
To give a concrete construction, we also provide an explicit set
of directions that satisfies these
properties.

\begin{definition}[Vertex-Isolating]\label{def:vertiso}
    We say that the pair~$(K,S)$ is \emph{vertex-isolating} if~$\apht{K}{S}$ has
    $d$ linearly-independent directions, denoted~$\{s_1, s_2, \ldots, s_d\}$,
    and one additional direction that uniquely orders the filtration grid of
    $K_0$ with respect to $\{s_1, s_2, \ldots, s_d\}$.  We may also say that $S$
    is \emph{vertex-isolating with respect to $K$}.
\end{definition}

\begin{definition}[Simplex-Isolating]\label{def:simplexiso}
    Let $\sigma \in \simComp \setminus \simComp_0$.
    We say that the pair~$(K,S)$ is \emph{$\sigma$-isolating}
    if~$S$ includes a direction $\dir_{\sigma}$
    such that:
    \begin{enumerate}[(a)]
        \item $\dir_{\sigma}$ is $K_0$-perpendicular to $\sigma$,
            and\label{stmt:prop:simplexiso-perp}
        \item for each $\emptyset \neq W \subsetneq V=\verts{\sigma}$, $S$ includes a direction
            that is a
            $(\simComp_0,V,W,\dir_{\sigma})$-perturbation.\label{stmt:prop:simplexiso-tiltdown}
    \end{enumerate}
    If the pair $(K, S)$ is $\sigma$-isolating for every simplex
    $\sigma \in K$, we say that~$(K, S)$ is \emph{simplex-isolating}.
	See \figref{wedge-3D}.
\end{definition}

\begin{figure} \centering \centering
    \includegraphics[height=1.7in]{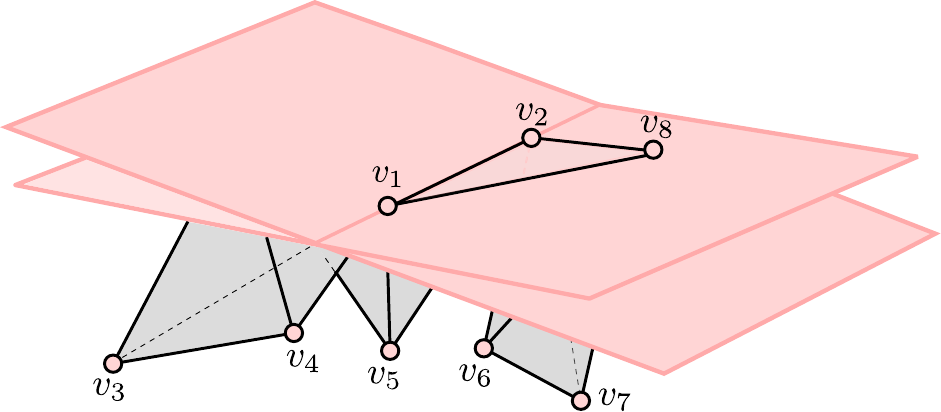} \caption[Wedge
	centered at a one-simplex] {The filtration hyperplanes (shaded in pink)
	corresponding to a pair of $[v_1, v_2, v_8]$-isolating directions,
	where we have~$V=\{v_1, v_2, v_8\}$ and $W= \{v_8\}$.  One hyperplane
	corresponds to a direction $\dir_V$ that is $\simComp_0$-perpendicular
	to $V = [v_1, v_2, v_8]$. The other corresponds to a direction that is
	a $(\simComp, V, W, \dir_V)$-perturbation) which, by pivoting
	$\dir_V$ around~$V\setminus W$, ``pops'' the vertex of~$W$ above the
	filtration hyperplane.
    }\label{fig:wedge-3D}
\end{figure}

Next, we make the observation that if there are directions that
are~$\sigma$-isolating for all maximal simplices of a complex, then the set of
directions is simplex-isolating. Namely, there are directions that are
$\sigma$-isolating for \emph{all} simplices of a complex,
regardless if the simplices are maximal.

\begin{lemma}[Recursive Nature of $\sigma$-Isolating Directions]\label{lem:recursive}
    Let $\simComp$ be a simplicial complex GP-immersed in~$\R^d$, and let $S$ be a
    set of directions such that $(K,S)$ is~$\sigma$-isolating for every maximal
    simplex $\sigma$.
    Then, $(K,S)$ is $\tau$-isolating for every simplex $\tau \in K$. That is,
    $(K,S)$ is simplex-isolating.
\end{lemma}

\begin{proof}
    Let $\tau \in \simComp$. Then, either $\tau$ is a maximal simplex or $\tau$
	is a proper face of a maximal simplex.  If~$\tau$ is maximal, the claim
	follows immediately by assumption.  Suppose, then, that $\tau$ is a
	proper face of a maximal simplex $\tau_m$.  Denote the vertex sets of
	$\tau$ and~$\tau_m$ by $T$ and $T_m$, respectively. 

	First, we will show $S$ contains a direction that is
	$K_0$-perpendicular to $\tau$.  Since~$S$ satisfies
	\stmtref{prop:simplexiso-perp} of \defref{simplexiso}, there exists a
	direction~$\dir_m \in S$ that is~$\simComp_0$-perpendicular to $T_m$.
	Then, since $S$ also satisfies \stmtref{prop:simplexiso-tiltdown}, and in
	particular when $V=T_m$ and $W=T_m \setminus T$, there is a
	direction~$\dir_\tau \in S$ that is a~$(\simComp_0, T_m, T_m
	\setminus~T, \dir_m)$-perturbation; $\dir_\tau$ is
	$\simComp_0$-perpendicular
	to~$T_m \setminus (T_m \setminus T) = T$, as desired.

	Next, we will show that $S$ contains a direction that is a $(K_0, \tau,
	\tau', s_\tau)$-perturbation for every $\tau'\preceq \tau$. If~$\tau$
	is a vertex, the claim is vacuously true. We therefore proceed
	assuming there exists some $\tau' \prec \tau$ and let $T' =
	\verts{\tau'}$.
    Since $T_m \setminus (T \setminus T')
    \subset T_m$, and since $S$ satisfies \stmtref{prop:simplexiso-tiltdown} of
    \defref{simplexiso},
    there exists a direction $\dir_p \in S$ that is a
    $(\simComp_0, T_m, T_m \setminus (T
    \setminus T'), \dir_m)$-perturbation. We show that $\dir_p$ satisfies the
    three properties of \defref{perturbation}.
    By definition,~$\dir_p$ is $\simComp_0$-perpendicular to $T_m
    \setminus (T_m \setminus (T \setminus T') =
    T \setminus T'$. Hence, $\dir_p$ satisfies \stmtref{perturbation-perp} of \defref{perturbation}.
    Also by definition, the points of
    $T_m \setminus (T \setminus T')$ are
    above~$T \setminus T'$ with respect to~$\dir_p$.  Since~$T' \subseteq T_m \setminus (T \setminus T')$,
    we conclude
    that the points in $T'$ are above~$T \setminus T'$ with
    respect to $\dir_p$.
    Hence, $\dir_p$ satisfies \stmtref{perturbation-popup} of \defref{perturbation}.
    Finally, let~$p \in \simComp_0
    \setminus T$. If~$p \in \simComp_0 \setminus T_m$,
    then $p$ is strictly above (below) the height of $\tau \setminus \tau'$ with
    respect to $\dir_p$ if and only if it is strictly above (below)
    the height of $\tau \setminus \tau'$ with respect to~$\dir_m$ by definition.
	Furthermore, this means~$p$ is above (below, respectively)~$\tau
	\setminus \tau'$ if and only if it is above (below) with
    respect to~$\dir_\tau$.  If, instead,~$p \in \simComp_0 \setminus
    (T_m \setminus T)$, then~$p \in \verts{\sigma}
    \setminus (T \setminus T')$, so $p$ is necessarily
    above~$\tau \setminus \tau'$ with respect to $\dir_\tau$ and~$\dir_p$.
    Hence,~$\dir_p$ satisfies \stmtref{perturbation-saveorder} of \defref{perturbation}
    and~$\dir_p$ is a $(\simComp_0, \tau, \tau',
    \dir_\tau)$-perturbation, as desired.
\end{proof}

The remainder of this
paper focuses on proving the following theorem:

\begin{theorem}[Sufficient Conditions for Faithful Discretization]\label{thm:main}
    Let $\simComp$ be a simplicial complex GP-immersed in~$\R^d$ such that
    $\dim(\simComp) =\kappa< d$,  and let~$S \subset \S^{d-1}$ such that
    $(\simComp,S)$ is vertex- and simplex-isolating. Then, $\apht{\simComp}{S}$
    is a faithful discretization of $VPHT(K)$.
\end{theorem}

In \secref{explicit}, we present algorithms that compute a set of
directions that is vertex- and simplex-isolating with respect to a simplicial
complex $K$. That is, we arrive at an explicit faithful discretization of the
verbose PHT.
However, before computing explicit directions, we first prove \thmref{main} for
general sets satisfying the conditions of being vertex- and simplex-isolating.

    \section{Reconstruction of Simplicial Complexes in $\R^d$}
    \label{sec:reconstruction}
In the following section, we prove \thmref{main} (Sufficient Conditions for
Faithful Discretization) by reconstructing a GP-immersed simplicial complex.
Our method first finds all zero-simplices, then all one-simplices, and so on.
In what follows, let $K$ be a simplicial complex GP-immersed in $\R^d$ and let
$S$ be a set of directions satisfying the conditions of \thmref{main}. We use
$\apht{K}{S}$ to reconstruct~$K$.

\subsection{Vertex Reconstruction}\label{ss:verts}
Our method for finding zero-simplices is a
straightforward generalization of the method of~\cite{belton2019reconstructing}.
We briefly state the result here.

\begin{lemma}[Vertex Reconstruction, {\cite[Theorem 9]{belton2019reconstructing}}]\label{lem:vertiso}
    Let $K$ be a simplicial complex GP-immersed in $\R^d$. Then, given a set of
    directions $S$ that is vertex-isolating, $\apht{\simComp}{S}$ can
    reconstruct $K_0$.
\end{lemma}

We refer the reader to~\cite{belton2019reconstructing} for full details, but
provide a proof sketch here. Since $S$ is vertex-isolating, it contains a set of
linearly-independent directions $\{s_1,
s_2 \ldots, s_d\}$.
Let $A$ denote the
    filtration grid of $K_0$ with respect to~$\{s_1, s_2, \ldots, s_d\}$.
    Since $S$ is vertex-isolating, it also contains some direction~$s_{d+1}$
    that uniquely orders points of $A$, leading to exactly $n_0$ pairwise
    intersections with $\pLines{s_{d+1}}{K_0}$ and $A$.  It is then a simple
    matter of checking for the locations of these intersections, for example,
    with a brute force algorithm. See \figref{filtlinesgen} for an example of
    how a vertex-isolating set may be used for vertex reconstruction.

\begin{figure}[h!] \centering
    \includegraphics[width=.4\textwidth]{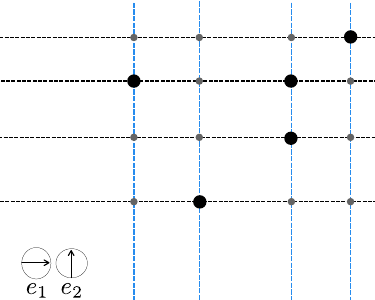}
    \hspace{1cm}
    \includegraphics[width=.4\textwidth]{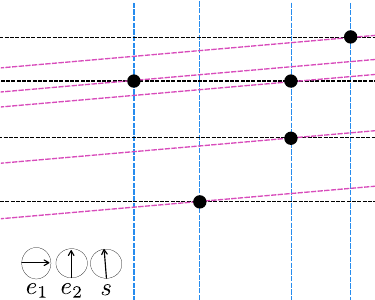}
    \caption[Vertex-Isolating Directions]{The vertex set $P$ (large black points)
    defines the filtration grid of $P$ with respect to $\{e_1, e_2\}$, denoted
    $A$ (grey and black dots in the left figure).
    The direction $s$, indicated on the right, uniquely orders the points of
    $A$. Thus, since $e_1$ and $e_2$ are linearly
    independent and since the direction $s$ uniquely orders the points of the
    $A$, the set $\{e_1, e_2, s\}$ is vertex-isolating for the given vertex set.
    To locate the vertices of the set, we simply need to identify all
    intersections of $\pLines{s}{P}$ (diagonal pink dashed lines) with $A$.}
    \label{fig:filtlinesgen}
\end{figure}

\subsection{Higher-Dimensional Simplex Reconstruction}\label{ss:higher}
In this section, we focus on methods for finding
higher-dimensional simplices, assuming that zero-dimensional simplices have
already been found.

\subsubsection{Computing k-Indegree}\label{sss:inkdeg}
The key to determining whether a simplex exists
is the \emph{$k$-indegree} of a potential simplex, $\sigma$, which
is the count of~$k$-dimensional
cofaces of $\sigma$ contained in $K$ occurring at the same height as
$\sigma$ in a particular direction. Importantly, these cofaces need not be
proper, so for the particular case when $k = \dim(\sigma)$, we count 1 if
$\sigma \in K$, and 0 if $\sigma \not\in K$.

\begin{definition}[$k$-Indegree for Simplex]\label{def:inkdeg}
    Let $\simComp$ be a simplicial complex GP-immersed in~$\R^d$ and
	let~$\sigma \subset \R^d$ be a simplex such that $\verts{\sigma}
	\subseteq K_0$.  Furthermore, let $\dir \in \sph^{d-1}$ be a direction
	$K_0$-perpendicular to $\aff{\sigma}$.  Then, the \emph{$k$-indegree of
	$\sigma$ in direction $\dir$} is the number of~$k$-dimensional cofaces
	of $\sigma$ contained in $K$ that have the same height as $\sigma$ in
	direction $\dir$.
\end{definition}

\begin{figure}[h!] \centering \includegraphics[height=1.5in]{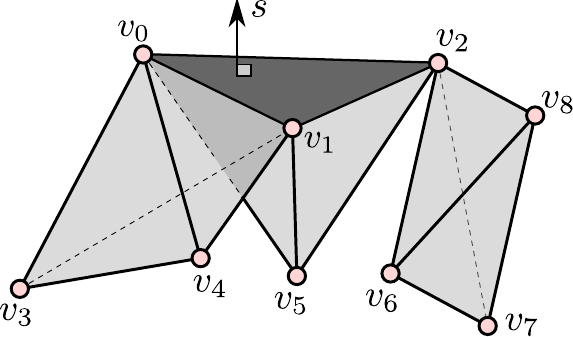}
    \caption[Computing $k$-indegree]{ Computing the three-indegree for a two-simplex
    (triangle) in $\R^4$.  The simplex~$\sigma$ is shown in dark
    gray.  The direction~$\dir \in \sph^3$ is orthogonal to $\aff{\sigma}$ such that
    all other vertices shown are below $\sigma$ (note that $\dir=\dir_{\sigma}$ from
    \defpartref{simplexiso}{perp}). Although the three-indegree
    of~$\sigma$ is one, the VPD in direction $\dir$
    sees three tetrahedron at the same height as~$\sigma$.
    Recursively using the three-indegree of all faces of $\sigma$ in tilted
    directions (that is, directions given in \defpartref{simplexiso}{tiltdown}),
    the three-indegree of $\sigma$ can be defined by
    subtracting the indegrees of faces of $\sigma$ in specific
    directions~($3-1-1=1$) given in~\eqnref{computekindegree}.
    }
    \label{fig:kindegree}
\end{figure}
If a direction $\dir$ is $K_0$-perpendicular to $\aff{\sigma}$, all
zero-simplices of $\sigma$ are at the same height in direction~$\dir$. However,
not all~$k$-simplices at this height contribute to the~$k$-indegree of
$\sigma$, as shown in \figref{kindegree}. Thus, if we only use diagrams that
are $K_0$-perpendicular to a simplex, we may overcount $k$-indegree.  To
correctly compute the $k$-indegree of a simplex $\sigma$, we combine
information from~$\sigma$-isolating directions (\defref{simplexiso}).
Directions that are $\sigma$-isolating are slight perturbations of $s$,
isolating faces of $\sigma$, so that we can check for $k$-simplices at the
height of $\sigma$ that are ``attached'' to $\sigma$, but are not actually
cofaces of $\sigma$ (see \figref{kindegree} for an illustration of this idea).

\begin{algorithm}[h!]
    \caption{$\indegAlgLong{\sigma}{\dir}{k}{\apht{K}{S}}{T = \{ \} }$}\label{alg:inkdeg}
\begin{algorithmic}[1]
	\REQUIRE $\sigma$, a simplex with $\verts{\sigma} \subseteq K_0$,
        $\dir \in S$ s.t.
	    $\dir$ is $\simComp_0$-perpendicular to $\sigma$,
	$k \in \N$ s.t. $k \geq \dim(\sigma)$,
        $\apht{K}{S}$, where $S$ satisfies the assumptions of 
        \thmref{main}, and a
        table~$T$ for memoization.
    \ENSURE the $k$-indegree for $\sigma$
	\IF {$S$ is not simplex-isolating for $\sigma$}
	\RETURN $0$
	\label{algln:inkdeg:notasimplex}
    \ELSE
    \STATE $c \gets$ height of $\sigma$ in direction $\dir$
    \STATE $\varName{numDeaths} \gets$ number of $(k-1)$-dimensional deaths
        in $\dgm{}{{\height{\dir}}}$ at
        height $c$ \label{algln:inkdeg:dk}
    \STATE $\varName{numBirths} \gets$ number of $k$-dimensional births
        in $\dgm{}{\height{\dir}}$
        at height $c$\label{algln:inkdeg:dk1}
    \label{algln:inkdeg:belowcount}
    \STATE $\varName{doubleCounts} \gets 0$
    \FOR{$\tau \prec \sigma$ in non-descending order by dimension}\label{algln:inkdeg:for}
        \STATE $W \gets \verts{\sigma}\setminus \verts{\tau}$
        \STATE $\dir_{\tau} \gets$ a $(\simComp_0, \verts{\sigma}, W,
                   \dir)$-perturbation
		\label{algln:inkdeg:dir}
        \IF{$T[\tau]$ was not computed yet}
            \label{algln:inkdeg:if}
            \STATE $T[\tau]\gets
            \indegAlgLong{\tau}{\dir_\tau}{k}{\apht{K}{S}}{T}$
            \label{algln:inkdeg:doublecount}
        \ENDIF
        \STATE $\varName{doubleCounts} \gets \varName{doubleCounts} +
          T[\tau]$
        \label{algln:inkdeg:recurse}
    \ENDFOR
    \RETURN $\varName{numDeaths} + \varName{numBirths} - \varName{doubleCounts}$\label{algln:inkdeg:return}
    \ENDIF
    \label{algln:return}
\end{algorithmic}
\end{algorithm}

To prove the correctness of \algref{inkdeg}, we make the following observation.

\begin{restatable}[Indegree Contributors]{lemma}{uniqueiso}
    \label{lem:uniqueiso}
    Let $\simComp$ be a simplicial complex GP-immersed in~$\R^d$.  Let $\tau$
	and $\sigma$ be potential simplices with $\verts{\tau} \subset
	\verts{\sigma} \subseteq K_0$, and let $\sigma'\in \simComp$.
	\mbox{Let $k \in \R$} such that $k=\dim(\sigma')$.  Let $\dir$ be
	$\simComp_0$-perpendicular to~$\sigma$ and let $\dir_\tau$ be a
	$(\simComp_0, \sigma, \sigma \setminus \tau, \dir)$-perturbation.
	Then, $\sigma'$ contributes to the~$k$-indegree of~$\tau$ in direction
	$\dir_\tau$ if and only if $\sigma'$ is at the same height as $\sigma$
	in direction $\dir$ and~$\tau = \sigma \cap \sigma'$.
\end{restatable}

\begin{proof}
    Let $f \colon \simComp \to \R$ ($f_\tau \colon \simComp \to \R$, respectively)
    be the filter function for direction $\dir$ ($\dir_\tau$, respectively).

    ($\Rightarrow$) Suppose that $\sigma'$ contributes to the $k$-indegree of
	$\tau$ in direction $\dir_\tau$. Then, by the definition of
	$k$-indegree,~$\tau \preceq \sigma'$ and $\sigma'$ is at the same height
	as $\tau$ with respect to direction $\dir_\tau$. Since $s_\tau$ is a
	$(\simComp_0, \sigma, \sigma \setminus \tau, \dir)$-perturbation, 
	this means $\sigma'$ is
	at the same height as $\tau$ in direction $s$, and therefore also at
	the same height as $\sigma$ in direction $\dir$.  Since $\tau \prec
	\sigma$ by assumption, we have~$\tau \preceq
	\sigma \cap \sigma'$. We must now show that~$\sigma \cap \sigma'
	\preceq \tau$.

    By contradiction, suppose that $\sigma \cap \sigma' \npreceq \tau$.  Then,
    there exists a vertex $v \in \sigma \cap \sigma'$
    such that $v \notin \tau$. Thus,
    $v \in \sigma \setminus \verts{\tau}$. Since $\dir_{\tau}$ is a
    $(\simComp_0, \sigma, \sigma \setminus \tau, \dir)$-perturbation, we know
    that~$\dir_\tau \cdot v > \dir_\tau \cdot
    w$ for all $w \in \verts{\tau}$,
    a contradiction to the claim
    that~$\sigma'$ contributes to the $k$-indegree of $\tau$
    in direction $\dir_\tau$. Therefore, $\sigma \cap \sigma' \preceq \tau$ as
    required.

    ($\Leftarrow$) Suppose that $\sigma'$ is at the same height as $\sigma$ in
	direction $\dir$ and that $\tau = \sigma \cap \sigma'$.  If~$\tau =
	\sigma'$, the claim follows from the definition of $k$-indegree.
	Then, suppose $\tau \prec \sigma'$.  Denote the dimension of $\tau$
	by $j$.  Since $\sigma'$ is a $k$-simplex,~$\tau$ is a $j$-simplex,
	and~$\tau \prec \sigma'$, we can write $\tau = [v_0, v_1, \ldots, v_j]$
	and $\sigma' = [v_0,v_1,\ldots, v_k]$ where~$v_i \in K_0$.~Then,
    \begin{linenomath}
    \begin{equation} \label{eqn:simplexheight}
    f_\tau(\sigma') =
        \max_{i=0}^k f_\tau(v_i)
        = \max \left( \max_{i=0}^j f_\tau(v_i) , \max_{i=j+1}^k f_\tau(v_i) \right)
        = \max \left(f_\tau(\tau), \max_{i=j+1}^k f_\tau(v_i) \right).
    \end{equation}
    \end{linenomath}
    Since $\sigma'$ is at the same height as $\sigma$ in direction $\dir$ and
    $\tau \prec \sigma$, $\sigma'$ is also at the same height as~$\tau$ in
    direction $\dir$, meaning that $f(v_i) \leq f(\tau)$ for all $0 \leq i \leq$
    k. Since $v_i$ is not in $\tau$ for $i > j$, it must also be the case that
    $f(v_i) < f(\tau)$ for $i > j$.
    Since $\dir_\tau$ is a $(K_0, \sigma, \sigma \setminus \tau, \dir)$-perturbation, by
    \stmtref{perturbation-saveorder} of \defref{perturbation},
    any vertex below $\tau$
    in direction $\dir$ is also below~$\tau$ in direction~$\dir_\tau$. Thus,
    $f_\tau(v_i) < f_\tau(\tau)$ for all $j < i \leq k$ and
    \begin{linenomath}
    \begin{equation*}
    \left( \max_{i=j+1}^k f_\tau(v_i) \right) < f_\tau(\tau).
    \end{equation*}
    \end{linenomath}
    Then, by~\eqnref{simplexheight}, $f_\tau(\sigma')=f_\tau(\tau)$. This taken
	together with $\tau \prec \sigma'$ shows that~$\sigma'$ contributes to
	the~$k$-indegree
of~$\tau$ in direction $\dir_\tau$.
\end{proof}

Although \lemref{uniqueiso} allows us to state formally which cofaces
contribute to a simplex's $k$-indegree, the computation of $k$-indegree
requires more than a single diagram (see \figref{kindegree}).
We use an inclusion-exclusion style argument to
compute
the $k$-indegree in \algref{inkdeg}. The first time this algorithm is
called, we have not yet computed any entries of $T$, a table that keeps track of
any contributors to $k$-indegree that would be double counted.
We prove the correctness of \algref{inkdeg} in the following theorem.

\begin{restatable}[Computing $k$-Indegree]{theorem}{indegValue}
\label{thm:indegValue}
    Let $\simComp$ be a simplicial complex GP-immersed in~$\R^d$.  Let $\sigma$
	be a potential simplex with $\verts{\sigma} \subseteq K_0$ and $\dir
	\in \sph^{d-1}$ such that $\dir$ is $\simComp_0$-perpendicular to
	$\sigma$.  Then, \mbox{for $k \geq \dim(\sigma)$},
	$\indegAlg{\sigma}{\dir}{k}{\apht{K}{S}}$ returns the~$k$-indegree of
	$\sigma$ in direction~$\dir$.
\end{restatable}

\begin{proof}
	First, we note that since $S$ is assumed to be simplex-isolating for
	$K$, if $S$ is not simplex-isolating for $\sigma$, then $\sigma$ is not a
	simplex of $K$ and thus must have $k$-indegree~$0$, which is the value
	returned on \alglnref{inkdeg:notasimplex}.

	Suppose then that $S$ is simplex isolating for $\sigma$.
    We prove the claim
    inductively on~$j= \dim(\sigma)$.
	For the base case $j=0$, consider the zero-simplex $[v]$. 
    Let~$h_\dir
    \colon \simComp \to \R$ be the filter function for direction $\dir$.
    We note that this base case is a  generalization of
    \cite[Lemma~11]{belton2019reconstructing}.
    However, unlike in \cite[Lemma~11]{belton2019reconstructing},
    we are only making an argument for the $k$-indegree at a single
    vertex and not all vertices. As such, we can relax the requirement
    that no two vertices in $\simComp_0$ have the same height in
    direction $\dir$ and just require that
    no vertices in $\simComp_0 \setminus \{v\}$ have
    the same height in direction $\dir$ as $v$.
    Thus, we have that $k$-indegree of $\sigma$ is equal to the
    number of~$k$-simplices that have
    height~$h_\dir(v)$, which, by
    \lemref{count}, is
\begin{linenomath}
    \begin{equation}\label{eqn:countkinvert}
        \lvert f^{-1}(f(v))\rvert = \lvert
        \{ (a,b) \in \dgm{k-1}{h_\dir} \text{ s.t. }  b = f(v)  \} \rvert
        +
        \lvert \{ (a,b) \in \dgm{k}{h_\dir} \text{ s.t. } a = f(v)  \}
        \rvert.
    \end{equation}
\end{linenomath}
    In \algref{inkdeg},
    notice that if $\sigma$ is a single vertex, we do not enter the
    loop that starts on \alglnref{inkdeg:for}.  Thus, the return value is
    exactly the number given in \eqnref{countkinvert}.

    For the inductive assumption, let $j \geq 0$.
    We assume that
    \algref{inkdeg} returns the~$k'$-indegree of $\tau$ in
    direction~$\dir$, for all $\tau \in
    \simComp_j$ and all $k'\geq j$.

	For the inductive step, let $\dim(\sigma) = j+1$.  Let $k \geq j+1$.
    Now, we compute the~$k$-indegree of~$\sigma$ in direction~$\dir$. Using
\lemref{count}, we know that the number of~$k$-simplices with
    height~$h_\dir(\sigma)$ in direction $\dir$ is:
\begin{linenomath}
    \begin{equation}\label{eqn:numBD}
        \delta := \lvert
        \{ (a,b) \in \dgm{k-1}{h_\dir} \text{ s.t. } b = f(\sigma)  \} \rvert
        +
        \lvert
        \{ (a,b) \in \dgm{k}{h_\dir} \text{ s.t. } a = f(\sigma)  \}
        \rvert.
    \end{equation}
\end{linenomath}
    Let $F_\sigma$ denote this set of simplices,
    let $\sigma' \in F_{\sigma}$,
    and let $\tau \prec \sigma$. Suppose that~$\dir_\tau$ is a~$(K_0, \verts{\sigma},
    \verts{\sigma \setminus \tau}, \dir)$-perturbation.  By \lemref{uniqueiso},
    the~$k$-simplex~$\sigma'$ contributes to the $k$-indegree of $\tau$ in
    direction $\dir_\tau$ if and only if $\tau = \sigma \cap \sigma'$.

    Then, we can isolate each face $\tau$ of $\sigma$ and compute
    the~$k$-indegree of $\tau$ using \eqnref{numBD},
    then add or subtract it from the $k$-indegree of
    $\sigma$, alternating by dimension of $\tau$.
    This ensures that no coface of $\tau \prec \sigma$
    adds to the $k$-indegree of
    $\sigma$. Formally, this is seen in the equation
    for the~$k$-indegree of $\sigma$:
\begin{linenomath}
    \begin{equation}\label{eqn:computekindegree}
        \delta - \sum_{\tau \prec \sigma} \inkdeg{\tau},
    \end{equation}
\end{linenomath}
    where $\inkdeg{\tau}$ is the $k$-indegree of $\tau$ in the corresponding
tilted directions.
    In \algref{inkdeg}, \varName{numDeaths}+\varName{numBirths} is equal to
    $\delta$, and the values $\delta_{\tau}$ are
    computed in \alglnref{inkdeg:doublecount} of \algref{inkdeg}.
    Thus, the return value
    matches \eqnref{computekindegree}.
\end{proof}

\subsection{Proof of \thmref{main} and Its Corollaries}\label{ss:full-alg}\label{ss:representation}
Using the results from the previous subsection, we arrive
at \algref{full-alg} that fully reconstructs a GP-immersed simplicial complex.

\begin{algorithm}[!htbp]
    \caption{$\reconAlgo$}\label{alg:full-alg}
    \begin{algorithmic}[1]
        \REQUIRE $\apht{K}{S}$, where $S$ is vertex- and 
         simplex-isolating (\defsref{vertiso}{simplexiso}).
        \ENSURE simplicial complex $\simComp$.
        \STATE $K_0 \gets$ vertices of $K$, as found using the methods
        of~\cite[Theorem 9]{belton2019reconstructing}
        \label{algln:full-alg:vertrecon}
        \FOR {$V \subseteq \simComp_0$ with $1 < \lvert V \rvert \leq d$ and in
        non-decreasing size of $V$}
        \STATE $k \gets \lvert V \rvert -1$
        \label{algln:full-alg:forDimBegin}
	    \IF{ $V \setminus \{v_i\} \in \simComp_{k-1}$ for all $v_i \in V$
	    and there exists a direction $s \in S$ that is $K_0$-perpendicular
	    to $\sigma$}
            \label{algln:full-alg:boundary}
                \IF{$\indegAlgLong{V}{\dir}{k}{\apht{K}{S}}{T = \{ \} } = 1$}
                \label{algln:full-alg:testSimp}
                        \STATE Add $V$ to $K_{k}$
                \ENDIF \label{algln:full-alg:ifEnd}
            \ENDIF
        \ENDFOR \label{algln:simComp:forDimEnd}

        \RETURN $\simComp_0 \cup \simComp_1 \cup \cdots \cup \simComp_{\kappa}$\label{algln:simComp:return}
    \end{algorithmic}
\end{algorithm}

\begin{theorem}[Simplicial Complex Reconstruction]
\label{thm:full-reconstruction}
    Let $\simComp$ be a $\maxd$-dimensional simplicial complex GP-immersed 
    in~$\R^d$, such that $\maxd \leq d-1$.
    Let $S \subset \S^{d-1}$ satisfy the assumptions of \thmref{main} (that is,
    $(\simComp,S)$ is vertex- and simplex-isolating).
    Then, \algref{full-alg} reconstructs $\simComp$.
\end{theorem}

\begin{proof}
    We begin by reconstructing  $K_0$ on \alglnref{full-alg:vertrecon} using the
    methods of~\cite[Theorem 9]{belton2019reconstructing}.
	\algref{full-alg} then iterates over all subsets of vertices
    $V \subset \simComp_0$. We do not yet know if the simplex defined by $V$ is in $K$.
    Since sets are included in non-decreasing size, $K_{k-1}$ is finalized by the time
    $V$ is considered.
    The
    condition that the boundary of the simplex defined by $V$ is contained in
	$\simComp_{k-1}$ is checked on \alglnref{full-alg:boundary}. 
	Since $\apht{K}{S}$ is simplex-isolating, if $V$ defines a $k$-simplex
	of $K$, the set $S$
	will contain a direction $s$ that is $K_0$-perpendicular to $\aff{V}$.
	Thus, if there is no such direction, we know $V$ does not define a
	simplex of $K$.
	If there is such a direction, by \thmref{indegValue},
	$\indegAlgLong{V}{\dir}{k}{\apht{K}{S}}{T = \{ \} }$ (\algref{inkdeg})
	returns the number of $k$-simplices at the height of $\aff{V}$ that
	contain the simplex defined by $V$ as a face; since $k = \lvert V
	\rvert-1$, this is either 0 if $V$ does not define a $k$-simplex of
	$K$, or~$1$ if~$V$ does define a $k$-simplex of $K$. In the latter case,
	we add $V$ to $K_k$. Since we iterate over all subsets of~$K_0$, the
	algorithm eventually finds all simplices.
\end{proof}

This theorem concludes the proof of \thmref{main}.
\camera{
when $\maxd \leq d-1$.
If the complex contains
codimension-zero simplices (that is, if $\maxd=d$), we can still compute a
representative set. The details of this case, including the modified reconstruction algorithm
(\algref{lifted}), are found in \appendref{simComp}.

}
An immediate corollary is as follows:

\begin{corollary}[Subcomplexes are Represented]\label{cor:subcomplex}
    Let $\simComp$ be a simplicial complex GP-immersed in $\R^d$.
    Let $S \subset \S^{d-1}$ satisfy the assumptions of \thmref{main}.
    Let~$\simComp'$
    be any subcomplex of $\simComp$.  Then
    $\apht{K'}{S}$ is a faithful
    discretization of~$VPHT(K')$.
\end{corollary}

    \section{Faithful Discretizations of Other Transforms}\label{sec:other}
    
In \secref{reconstruction}, we showed that a set of $\pds$ that is vertex- and
simplex-isolating for some simplicial complex $K$ is a faithful discretization
of $VPHT(K)$. Here, we show that the properties of being vertex- and
simplex-isolating are also sufficient to form faithful discretizations of other
topological transforms, namely, any verbose dimension-returning transform
(such as the VBFT), as well as the VECFT. The extension of
\secref{reconstruction} to the former is nearly immediate. The extension to the
VECFT is also reasonably straightforward, but requires more careful treatment.

\subsection{Faithful Discretization of the VBFT and Other Verbose Dimension-Returning
Transforms}

The results of \secref{reconstruction} rely on the heights and dimension of
simplices only.\footnote{\label{foot:forget}In fact, if the vertex locations are
known, then we are only using the total order of the vertices (with respect to
the directions in $S$).  Specifically, for each direction $\dir$ and each
vertex~$v \in \simComp_0$, we define an equivalence relation on~$K$ such that
the equivalence class for $v$ is: $[v]_{\dir}:= \{ \sigma \in K \text{ s.t. }
h_{\dir}(\sigma)=h_{\dir}(v) \}$.  The set of equivalence classes is totally
ordered by the height in direction~$\dir$.}   Crucially, this means the same set
of directions that faithfully discretize the VPHT can faithfully discretize any
topological transform that contains the heights and dimension of simplices
(e.g., the VBFT).  We call such transforms \emph{verbose dimension-returning
transforms}.

\begin{corollary}[A Faithful Discretization of the VBFT and Other Verbose Dimension-Returning
    Transforms]\label{cor:disc-bct}
    Let $\simComp$ be a simplicial complex GP-immersed in~$\R^d$.
    Let $S \subset \S^{d-1}$ satisfy the assumptions of \thmref{main}.
    Then, the set $S$ parameterizes a faithful discretization of the VBFT and
    other dimension-returning transforms for $\simComp$.
\end{corollary}

With small adaptations, our results can also be extended to the VECFT, despite
the fact that the VECFT is not dimension-returning. We include details of this
adaptation in the following subsection.

\subsection{Faithful Discretization of the VECFT}
In practice, verbose Euler characteristic functions ($\aeccs$) are often
preferred to $\apds$ due to the existence of
faster algorithms for computing the functions.
For instance,~\cite{richardson2014efficient} gives an algorithm to compute the
$\ecc$ in linear time.
    This faster computation time makes the (V)ECFT a good candidate for
    processing large amounts of data. For example,
    in~\cite{amezquita2022measuring}, CT scans of barley seeds and spikes are
    considered in discrete ``slices'' with respect to 158 directions, creating a
    sampling of the associated Euler characteristic functions and therefore also
    sampling the ECFT. Through this sampling, they are able to use machine
    learning to distinguish between~28 barley phenotypes, agreeing with a
    biologically based classification.

In \corref{disc-bct}, we observed that a direction set $S$ that is vertex- and
simplex-isolating for a simplicial complex $K$ not only yields a faithful
discretization of the VPHT, but also of the VBFT as well as other verbose
dimension-returning transforms.
However, unlike $\apds$ or $\abcs$, $\aeccs$ are not dimension-returning.
In particular, this means that we cannot directly compute $k$-indegree as in
\algref{inkdeg} when using $\aeccs$. However, we can adapt
the notion of $k$-indegree to a notion of even- and odd-indegree.
We will then show how to use even- and odd-indegree to reconstruct all
higher-dimensional simplices, therefore showing that a set of directions that is
vertex- and simplex-isolating gives us a faithful discretization of the VECFT.
That is, we will show that the set of diagrams described in \secref{discretization}
that faithfully discretize the VPHT
also faithfully discretize the VECFT.

The remainder of this section serves to prove this claim (\thmref{aeccmain}).
Our main tool is the following adaptation of the
definition of $k$-indegree to even/odd-indegree, where, rather than counting
simplices of a particular dimension, we count
even-dimensional (or odd-dimensional) simplices.

\begin{definition}[Even/Odd-Indegree for Simplex]\label{def:indeg}
Let $\simComp$ be a simplicial complex GP-immersed in~$\R^d$ and let~$\sigma
	\subset \R^d$ be a simplex such that $\verts{\sigma} \subseteq K_0$.
	Let $\dir \in \sph^{d-1}$ be a direction $K_0$-perpendicular to
	$\aff{\sigma}$.  Then, the \emph{even-indegree of $\sigma$ in direction
	$\dir$} (respectively, \emph{odd-indegree} of $\sigma$ in direction
	$\dir$) is the number of~even-dimensional (respectively,
	odd-dimensional) cofaces of $\sigma$ contained in $K$ that have the
	same height as $\sigma$ with respect to the direction $\dir$.
\end{definition}

Just as we did with the notion of $k$-indegree, we need to keep track of
how many times we double-count faces of a simplex in order to correctly find the
even/odd-indegree of that simplex.  The following lemma helps us prove the
correctness of this calculation.

\begin{lemma}[Even/Odd-Indegree Contributors]
    \label{lem:aecc_uniqueiso}
    Let $\simComp$ be a simplicial complex GP-immersed in~$\R^d$.  Let $\tau$
	and $\sigma$ be potential simplices with $\verts{\tau} \subset
	\verts{\sigma} \subseteq K_0$, and let $\sigma'\in \simComp$.  Suppose
	$\dim(\sigma')$ is even (respectively, odd).  Let $\dir$ be a direction
	that is $\simComp_0$-perpendicular to $\sigma$ and let~$\dir_\tau$ be a
	direction that is a $(\simComp_0, \sigma, \sigma \setminus \tau,
	\dir)$-perturbation.  Then, $\sigma'$ contributes to the~even-indegree
	(respectively, odd-indegree) of~$\tau$ in direction $\dir_\tau$ if and
	only if $\sigma'$ is at the same height as $\sigma$ in direction $\dir$
	and~$\tau = \sigma \cap \sigma'$.
\end{lemma}

We omit a proof, as it is identical to the proof of \lemref{uniqueiso}.  We now
present \algref{evenindeg}, which computes even-indegree. The case for
computing odd-indegree is nearly identical.
\begin{algorithm}[h!]
    \caption{$\evenindegAlgLong{\sigma}{\dir}{\aecct{K}{S}}{T = \{ \}
    }$}\label{alg:evenindeg}
\begin{algorithmic}[1]
    \REQUIRE $\sigma$ a simplex with $\verts{\sigma} \subseteq K_0$,
        $\dir \in S$ s.t.
            $\dir$ is $\simComp_0$-perpendicular to $\sigma$,
        $\aecct{K}{S}$, where $S$ satisfies the assumptions of 
        \thmref{main}, and a
        table~$T$ for memoization.
    \ENSURE the even-indegree for $\sigma$
	\IF {$S$ is not simplex-isolating for $\sigma$}
	\RETURN $0$
	\label{algln:eveninkdeg:notasimplex}
    \ELSE
    \STATE $c \gets$ height of $\sigma$ in direction $\dir$
    \STATE $\varName{allEven} \gets$ number of even-dimensional events
        in $\widehat{\chi}({\height{\dir}})$ at
        height $c$ \label{algln:inkdeg:dk}
    \label{algln:inkdeg:belowcount}
    \STATE $\varName{doubleCounts} \gets 0$
    \FOR{$\tau \prec \sigma$ in non-descending order by dimension}\label{algln:inkdeg:for}
        \STATE $W \gets \verts{\sigma}\setminus \verts{\tau}$
        \STATE $\dir_{\tau} \gets$ a $(\simComp_0, \verts{\sigma}, W,
                   \dir)$-perturbation
		\label{algln:inkdeg:dir}
        \IF{$T[\tau]$ was not computed yet}
            \label{algln:inkdeg:if}
            \STATE $T[\tau]\gets
            \evenindegAlgLong{\tau}{\dir_\tau}{\aecct{K}{S}}{T}$
            \label{algln:inkdeg:doublecount}
        \ENDIF
        \STATE $\varName{doubleCounts} \gets \varName{doubleCounts} +
          T[\tau]$
        \label{algln:inkdeg:recurse}
    \ENDFOR
    \RETURN $\varName{allEven} - \varName{doubleCounts}$\label{algln:inkdeg:return}
    \ENDIF
    \label{algln:return}
\end{algorithmic}
\end{algorithm}

We assert the correctness \algref{evenindeg} in the following theorem.

\begin{theorem}[Computing Even/Odd-Indegree]
    \label{thm:evenindegValue}
    Let $\simComp$ be a simplicial complex GP-immersed in~$\R^d$.  Let $\sigma$
	be a simplex with $\verts{\sigma} \subseteq K_0$ and $\dir \in
	\sph^{d-1}$ such that $\dir$ is $\simComp_0$-perpendicular to $\sigma$.
	Then, \mbox{for $k \geq \dim(\sigma)$},
	$\evenindegAlg{\sigma}{\dir}{\aecct{K}{S}}$ returns the~even-indegree
	of $\sigma$ in direction~$\dir$.
\end{theorem}


\begin{proof}
    The proof is similar to the proof of \thmref{indegValue}.
	Since $S$ is assumed to be simplex-isolating for
	$K$, if $S$ is not simplex-isolating for $\sigma$, then $\sigma$ is not a
	simplex of $K$ and thus must have even-indegree 0, which is the value
	returned on \alglnref{eveninkdeg:notasimplex}.

	Suppose then that $S$ is simplex isolating for $\sigma$.
    We prove the claim
    inductively on~$j= \dim(\sigma)$ and let $h_s: K \to \R$ denote the filter
    function for direction $s$.
    For the base case ($j=0$), consider $\sigma = [v]$, a single zero-simplex.
    Since $\dir$ is $K_0$-perpendicular to $\sigma$, the even-indegree of
    $v$ is exactly equal to the number of~even-simplices that have
    height~$\dir \cdot v$, which can immediately be read off of $\aecc(\dir)$. In
    \algref{evenindeg}, notice that since the vertex $v$ does not have any
    proper faces, we do not
    enter the loop that starts on \alglnref{inkdeg:for}.  Thus, the return value
    is exactly the number of even simplices with height~$\dir \cdot v$.

    For the inductive step, let $j \geq 0$.  We assume that
    \algref{evenindeg} returns the~even-indegree of $\tau$ in direction~$\dir$,
    for all $\tau \in \simComp_j$.
	For the inductive step, let~$\dim(\sigma) = j+1$.
    Now, we compute the~even-indegree of~$\sigma$ in direction~$\dir$.
    Let $F_\sigma$ denote the set of~even-dimensional simplices with
    height~$h_\dir(\sigma)$ in direction~$\dir$, whose cardinality is reported
    in $\aecc(\dir)$. Let $\sigma' \in F_{\sigma}$,
    and let $\tau \prec \sigma$. Suppose that $\dir_\tau$ is a~$(K_0, \verts{\sigma},
    \verts{\sigma \setminus \tau}, \dir)$-perturbation.  By \lemref{aecc_uniqueiso},
    $\sigma'$ contributes to the even-indegree of $\tau$ in
    direction $\dir_\tau$ if and only if $\tau = \sigma \cap \sigma'$.

    Then, we can isolate each face of $\tau \prec \sigma$, identify
    the~even-indegree of $\tau$ by the inductive assumption,
    and add or subtract it from the even-indegree of
    $\sigma$, alternating by dimension of $\tau$.
    This ensures that no coface of $\tau \prec \sigma$
    adds to the even-indegree of
    $\sigma$. Formally, this is seen in the equation
    for the even-indegree of $\sigma$
    \begin{equation}\label{eqn:computeevenindegree}
        \delta - \sum_{\tau \prec \sigma} \delta_\tau,
    \end{equation}
    where $\delta_\tau$ is the even-indegree of $\tau$ in the corresponding
    tilted directions.
    In \algref{evenindeg}, \varName{allEven} is equal to
    $\delta$, and the values $\delta_{\tau}$ are
    computed in \alglnref{inkdeg:doublecount}.
    Thus, the return value of \algref{evenindeg}
    matches the value of \eqnref{computeevenindegree}.
\end{proof}
The last step of the full process of reconstruction is to note that
vertex-reconstruction only relied on the presence of events, not their
dimension. Therefore, vertex-isolating directions can be used with any verbose
descriptor to reconstruct $K_0$.
Finally, we can give a reconstruction algorithm and state the following main result.
\begin{algorithm}[h!]
    \caption{$\aeccreconAlgo$}\label{alg:aeccfull-alg}
    \begin{algorithmic}[1]
        \REQUIRE $\aecct{K}{S}$, where $S$ is vertex- and 
         simplex-isolating (\defsref{vertiso}{simplexiso}).
        \ENSURE simplicial complex $\simComp$.
        \STATE $K_0 \gets$ vertices of $K$, as found using the methods
        of~\cite[Theorem 9]{belton2019reconstructing}
        \label{algln:aeccfull-alg:vertrecon}
        \FOR {$V \subseteq \simComp_0$ with $1 < \lvert V \rvert \leq d$ and in
        non-decreasing size of $V$}
        \STATE $k \gets \lvert V \rvert -1$
        \label{algln:aeccfull-alg:forDimBegin}
            \IF{ $V \setminus \{v_i\} \in \simComp_{k-1}$ for all $v_i \in V$
	    and there exists a direction $s \in S$ that is $K_0$-perpendicular
	    to $\sigma$}
            \label{algln:aeccfull-alg:boundary}
    \IF {$\lvert V \rvert -1$ is even}
	    \STATE $x \gets \evenindegAlgLong{V}{\dir}{\aecct{K}{S}}{T = \{ \}}$
	    \COMMENT{.37in}{\algref{evenindeg}}
            \label{algln:aecccertificate:retvaleven}
    \ELSE
	    \STATE $x \gets \oddindegAlgLong{V}{\dir}{\aecct{K}{S}}{T = \{ \}}$
	    \COMMENT{.37in}{Odd \algref{evenindeg}}
            \label{algln:aecccertificate:retvalodd}
    \ENDIF
	\IF {$x \geq 1$}
            \STATE Add $V$ to $K_{k}$
	\ENDIF
                \ENDIF \label{algln:aeccfull-alg:ifEnd}
        \ENDFOR 
        \RETURN $\simComp_0 \cup \simComp_1 \cup \cdots \cup \simComp_{\kappa}$\label{algln:simComp:return}
    \end{algorithmic}
\end{algorithm}

\begin{theorem}[Sufficient Conditions for Faithful Discretization]\label{thm:aeccmain}
    Let $\simComp$ be a simplicial complex GP-immersed in~$\R^d$ such that
    $\dim(\simComp) =\kappa< d$,  and let $S \subset \S^{d-1}$ such that
	$(\simComp,S)$ is vertex- and simplex-isolating. Then,
	\algref{aeccfull-alg} reconstructs $K$, so $\aecct{\simComp}{S}$ is a
	faithful discretization of $VECFT(K)$.
\end{theorem}

\begin{proof}
	The initial steps of the proof follow the reasoning given in
	\thmref{full-reconstruction}; we reconstruct $K_0$, and when we
	consider some $V$ with $\lvert V \rvert -1 = k$, we have already reconstructed
	$K_{k-1}$.  First, suppose $k$ is even.  Since $\apht{K}{S}$ is
	simplex-isolating, if $V$ defines a simplex of $K$, the set $S$
	contains a direction $s$ that is $K_0$-perpendicular to $\aff{V}$.
	Thus, if there is no such direction, we know $V$ does not define a
	simplex of $K$.

	If there is such a direction, by \thmref{evenindegValue},
	$\evenindegAlgLong{\sigma}{\dir}{\aecct{K}{S}}{T = \{ \}}$
	(\algref{evenindeg}) returns the number of even-simplices at the height
	of $\aff{V}$ that contain $\sigma$ as a face; if this value is 0, we
	know $V$ is not a simplex of $K$. If this value is~$1$ or greater,
	we know $V$ does define a simplex of $K$, and we add $V$ to $K_k$.
	A nearly identical argument holds for the case that $k$ is odd. Since
	we iterate
	over all subsets of~$K_0$, the algorithm eventually finds all
	simplices.

	Finally, since we have shown $\aeccreconAlgo$ reconstructs $K$, we
	know~$\aecct{K}{S}$ is a faithful discretization of $VECFT(K)$.
\end{proof}

    \section{Explicitly Building a Faithful Set}
    \label{sec:build}
    In this section, given a simplicial complex $\simComp$ GP-immersed in $\R^d$,
we provide algorithms to explicitly construct a set of directions~$S \subseteq
\sph^{d-1}$ so that $(K,S)$ is vertex- and simplex-isolating.

\subsection{Auxiliary Constructions}
\label{sec:aux}
This section describes the methods that are used to compute the
directions in \thmref{explicit}: how to use Gram-Schmidt to find a direction
orthogonal to a given plane,
how to find a set of points that span a given affine subspace,
how to ``tilt'' one direction towards another direction while controlling the
order of a given set of points,
and how to use tilt in order to ``pop'' a subset of vertices off of a given
hyperplane.

\paragraph*{Computing a Perpendicular Direction}
We often compute directions orthogonal to the affine plane spanned by some set
of $k+1$ points. To do so, we use Gram-Schmidt orthogonalization to first find
$k$ vectors in the space spanned by the points, and then perform Gram-Schmidt
orthogonalization on the standard basis vectors until we find one that is not
in the space spanned by the points. The result is the orthogonal vector that we
desire.  While intuitive, the explicit formulation that we need is not easily
accessible from the literature, so we derive the formulation here.

\begin{algorithm}[h!]
    \caption{$\perpendicular{P}$}\label{alg:perp}
\begin{algorithmic}[1]
    \REQUIRE $P = \{p_0, p_1, \ldots, p_{k}\} \subset \R^d$, a set of affinely independent points such that
        $k < d$.
    \ENSURE $\dir \in \sph^{d-1}$, a direction that is orthogonal to $\aff{P}$.
    \STATE $A \gets$  the $d \times k$ matrix containing $p_i - p_0$ in column $i$ for
        $1 \leq i \leq k-1$
        \label{algln:perp:matrix}
    \STATE $\{b_i\}_{i=1}^{k} \gets$ basis vectors for the space spanned by $A$

        \COMMENT{3in}{QR-decomposition}
         \label{algln:perp:qn}
    \STATE $i \gets 0$
    \STATE $\dir \gets 0$
    \WHILE{$\dir = 0$}\label{algln:perp:whilebegin}
        \STATE $\dir \gets$ the output of Gram-Schmidt on $\{b_i\}_{i=1}^{k}$ with $e_i$
        \STATE $i \gets i+1$
    \ENDWHILE \label{algln:perp:whileend}
    \RETURN $\dir$
    \label{algln:perp:dir}
\end{algorithmic}
\end{algorithm}

\begin{lemma}[Computing a Perpendicular Direction]\label{lem:gs}
    Let $P \subset \R^d$ be a point set in general position with $\lvert P
    \rvert \leq d$.
    Then, the output of \algref{perp} is a
    direction $s \in \sph^{d-1}$ orthogonal to $\aff{P}$ and runs in $\perpTimeFull$
    time.
\end{lemma}

\begin{proof}
    Denote the points of $P$ by $p_0, p_1, \ldots, p_{k}$.  In
    \alglnref{perp:matrix}, we define $A$ as
    the $d \times k$ matrix consisting of $p_i - p_0$ in column $i$. By Theorem
    $7.1$ in \cite{trefethen1997numerical}, $A$ has a reduced QR-factorization
    in which the columns of $Q$ are an orthonormal basis for the space spanned
    by $A$, or $\aff{P}$. We
    find $Q$ using $k$ iterations of Gram-Schmidt on \alglnref{perp:qn}.

    Then, in the loop at \alglnrefRange{perp:whilebegin}{perp:whileend}, we
    iteratively test the result of performing Gram-Schmidt on the columns of $Q$
    and a basis vector. The result is zero if the basis vector is contained
    in $\aff{P}$. As soon as the result is nonzero, (which has to happen since
    $\aff{P}$ has positive codimension, so can't be spanned by all $e_i$s), we
    have found a vector orthogonal to $\aff{P}$, and the value returned on
    \alglnref{perp:dir} is as desired.

    Building the matrix $A$ and normalizing $\dir$ takes constant time. Finding
    the reduced QR-factorization takes $k$ iterations of Gram-Schmidt, (Theorem
    8.1,~\cite{trefethen1997numerical}) and finding the final vector that is
    orthogonal to the space spanned by $P$ requires at most $d$ iterations of
    Gram-Schmidt. Since each iteration of Gram-Schmidt takes $O(dk)$ time, we see
    that the total runtime is $\perpTimeFull$.
\end{proof}

The ability to find a direction perpendicular to some set of points using
\algref{perp} is used in \algref{pivot}. Before introducing \algref{pivot},
we first provide two additional algorithms that are utilized in
\algref{pivot}.

\paragraph*{Plane Filling}
Given a point set $P \subset \R^d$ of $k$ affinely independent points and a
direction $\dir$, \algref{planefill} finds a complementary set of points $P'$
such that $\aff{P' \cup P}$ has only two perpendicular directions, $\dir$ and
$\dir'$.  In other words, we find enough points in $\R^d$ so that they, along
with the original point set $P$, ``fill'' the plane.
\begin{algorithm}[h!]
    \caption{$\planefill{P}{\dir}$}\label{alg:planefill}
\begin{algorithmic}[1]
    \REQUIRE $P = \{p_0, p_1, \ldots, p_{k}\} \subset \R^d$,
        a set of affinely independent points such that
        $k < d$;
        $\dir \in \sph^{d-1}$, a direction that is orthogonal to $\aff{P}$
    \ENSURE a set $P'\subset \R^d$ of $d-\lvert P \rvert$ points at the same height as $P$
    with respect to direction~$s$
    such that~$\dim(\aff{P' \cup P}) = d-1$.
    \STATE $P' \gets \emptyset$
    \STATE $A \gets$  the $d \times k$ matrix containing $p_i - p_0$ in column $i$
	for $1 \leq i \leq k-1$ and $s$ in the $k$th column
		\label{algln:planefill:matrix}
    \STATE $\{b_i\}_{i=1}^{k} \gets$ basis vectors for the space spanned by $A$

        \COMMENT{3in}{QR-decomposition}
                \label{algln:planefill:qn}
    \STATE $j \gets 0$
    \FOR{$i$ from $1$ to $d$}\label{algln:planefill:forbegin}
    \WHILE{$j < d-k$}\label{algln:planefill:pointbegin}
        \STATE $\dir_j \gets$ the output of Gram-Schmidt on $\{b_i\}_{i=1}^{k}$ 
        with $e_i$ \label{algln:planefill:gs}
        \IF{$\dir_j \neq 0$}
	    \STATE $P' \gets P' \cup \{\dir_j + p_0\}$
            \label{algln:planefill:points}
            \STATE $j \gets j+1$
        \ENDIF
    \ENDWHILE \label{algln:planefill:pointend}
    \ENDFOR\label{algln:planefill:forend}
    \RETURN $P'$
\end{algorithmic}
\end{algorithm}

We do so by first considering a matrix $A$ that describes the equation of
the~$(d-1)$-plane orthogonal to $\dir$ containing the points of $P$. Recall
that the left null space of $A$ is the space of all vectors $n$ such that $n^T
A = 0$.  Thus, for such a vector $n$, the points $n - p_0$ are also in the
plane described by $A$. We are able to find $d - \lvert P \rvert$ of these
vectors and corresponding points by computing a basis of the left null space,
since this space is $(d- \lvert P \rvert)$-dimensional.

\begin{lemma}[Plane Filling] \label{lem:planefill}
    \algref{planefill} is correct and runs in $\planefillTimeFull$~time.
\end{lemma}

\begin{proof}
    We first prove the runtime of this algorithm by analyzing what is done in
	each line of the algorithm.  First, we initialize $P'$ to the empty
	set.  Then, we construct a matrix $A$ (\alglnref{planefill:matrix})
	containing $p_i - p_0$ in column $i$ for the first $d-k$ columns, and
	the vector $\dir$ in the last column, which takes $\Theta(kd)$
	time.\lucy{I am very unsure but could this be $kd$ not $nd$?}\anna{I
	think so too, I updated}  As the points in $P$ are affinely independent
	and $\dir$ is orthogonal to $\aff{P}$, the dimension of the column
	space of the matrix $A$ (\alglnref{planefill:matrix}) is $k$, which
	means that there are $d-k$ vectors in its nullspace.  We find the
	nullspace by first finding a basis for the space spanned by A in
	\alglnref{planefill:qn}, via a full
	QR-decomposition~\cite{trefethen1997numerical}, and then by using
	Gram-Schmidt in \alglnrefRange{planefill:forbegin}{planefill:forend},
	taking~$\Theta(d^3)$ time. \alglnref{planefill:gs} always has at least
	$d-k$ nonzero outputs, and we stop once we have $d-k$ directions
	perpendicular to $\aff{P}$. We then iteratively add points to~$P'$ in
	\alglnref{planefill:points}.  \anna{ I no longer remember what this was
	referencing ``This loop takes takes time $\Theta(d^2)$.''}  Computing
	the full QR-decomposition and performing Gram-Schmidt dominates the
	algorithm; hence, the running time for \algref{planefill} is
	$\planefillTimeFull$.

    Finally, for correctness, we prove that the output of \algref{planefill} has
    the desired properties.  By construction, we have $\lvert P'\rvert = d -
    \lvert P \rvert$, i.e., $\dim(\aff{P' \cup P}) = d-1$.  To show each point
    in $P'$ is at the same height as $P$, it suffices to show that~$\dir \cdot
    p_0 = \dir \cdot p_i'$ for all $p_i' \in V'$. Since each $s_j$ considered in
    \alglnref{planefill:points} is in the basis
    for the left null space of $A$ and $s$ is a column of~$A$, all such vectors
    $s_j$ are orthogonal to~$\dir$.  Then, indeed, for all~$p_i' \in V'$,
    we have~$\dir \cdot p_i' = \dir \cdot (s_j + p_0) = \dir \cdot p_0$.
    \anna{needs review}
\end{proof}

\paragraph*{Tilting}
    In the \algref{tilt}, we find a direction that is a slight tilt of one
input direction towards another, so that no vertex orders change.
First, we explain the geometric
    intuition of the algorithm.
    Let $S$ be the set of $n$ line segments in $\R^2$:
    \begin{linenomath}
        $$S := \bigg\{ \overline{(0,\dir \cdot p), (1,\dir' \cdot p)}
        \bigg\}_{p \in P}.$$
    \end{linenomath}
    Each line segment in $S$ represents a linear interpolation between the
    points~$(0,\dir \cdot p)$ and $ (1, \dir' \cdot p)$,
    which correspond to the heights in directions
    $\dir$ and~$\dir'$ of each point in $P$.
    \begin{figure}
        \centering
        \includegraphics[height=1.5in]{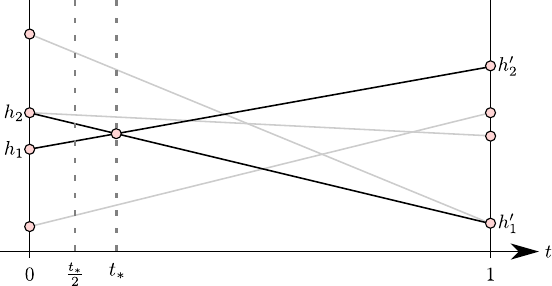}
        \caption[Geometric interpretation of \algref{tilt}] {The solid grey
        lines in the figure above indicate the changing heights of points as we
        swing direction $\dir$ towards $\dir'$. Although we do not explicitly
        compute the grey lines, we know by simple geometry that no intersection
        of grey lines (and, in particular, no swapping of point orders) occurs
        before the $t$-value $t_*$, which corresponds to the intersection of the
        closest pairwise heights of points on the left and the extremal heights
        of points on the right, as indicated by the black lines. Since there are
        no crossings of line segments before $\frac{1}{2}t_*$, there is
        therefore no change in the order of points with respect to
        direction~$\dir_t = (1-\frac{1}{2}t_*)\dir + \frac{1}{2}t_* \dir'$.
    }
    \label{fig:interpolation} \end{figure}
    Moreover, we can parameterize each line segment in~$S$ as $(1-t)\dir \cdot p
    + t \dir' \cdot p$ for~$t \in [0,1]$ (see the grey lines in
    \figref{interpolation}).  Then, vertical cross sections record vertex
    heights with respect to some direction that is an interpolation of $\dir$
    and $\dir'$.  We want to identify a particular $t_* > 0$ such that the
    ordering of the heights of points in direction~$\dir_t = (1-t_*)\dir + t_*
    \dir'$ is consistent with the ordering of the points in direction~$\dir$.
    Notice that no swapping of point order can occur before the intersection of
    black lines in \figref{interpolation}; thus, we choose the direction that is halfway to this
    point of the interpolation (corresponding to the dashed vertical line in
    \figref{interpolation}) as the output of \algref{tilt}.
\begin{algorithm}[h!]
    \caption{$\tiltraw{\dir}{\dir'}{P}$}\label{alg:tilt}
    \begin{algorithmic}[1]
        \REQUIRE $P \subset \R$ finite; $\dir,\dir' \in \S^{d-1}$.
        \ENSURE $s_t \in \sph^{d-1}$, a direction satisfying
            \stmtsref{tilt-first}{tilt-third} of \lemref{tilt}.
        \STATE $h_1, h_2 \gets$ heights of points that are closest with respect
        to $\dir$ such that $h_1 \neq h_2$ 
        \label{algln:tilt:closest}
        \STATE $h'_1, h'_2 \gets$ minimum and maximum heights of points with respect to
            $\dir'$
\label{algln:tilt:extremal}
        \STATE $t_* \gets$ the solution to $(1-t) h_1 + t h_2' = (1-t)h_2 + t h_1'$            
        \label{algln:tilt:tstar}
        \RETURN $(1-\frac{1}{2}t_*)s+ \frac{1}{2}t_* s'$
    \end{algorithmic}
\end{algorithm}

\begin{restatable}[Tilt]{lemma}{tilt}\label{lem:tilt}
    Let $P \subset \R^d$ be a finite set. Let $\dir, \dir' \in \sph^{d-1}$.
    Then, using \algref{tilt}, we can compute a
    third direction~$\dir_t=\tiltraw{\dir}{\dir'}{P}$ in~$\tiltTimeFull$ time such
    that the following properties holds for all points~$p_1,p_2 \in P$:
    \begin{enumerate}[(i)]
	\item If $p_1$ is strictly above (below) $p_2$ with respect to direction $\dir$,
	    then $p_1$ is strictly above (below, respectively) $p_2$ with
            respect to direction $\dir_t$.\label{stmt:tilt-first}
        \item If $p_1$ and $p_2$ are at the same height with respect to
            direction~$\dir$ and $p_1$ is strictly above (below) $p_2$ with respect to
            direction~$\dir'$, then $p_1$ is strictly above (respectively,
            below) $p_2$ with respect to
            direction $\dir_t$. \label{stmt:tilt-second}
        \item If $p_1$ is at the same height as $p_2$ with respect to
            both directions $\dir$ and $\dir'$, then $p_1$ and $p_2$ are at the
            same height with respect to direction
            $\dir_t$.\label{stmt:tilt-third}
    \end{enumerate}
\end{restatable}

\begin{proof}
    First let $h_1, h_2$ be the heights of points in $P$ with the closest unequal heights in
direction $\dir$, and let $h_1', h_2'$ be the heights of points in $P$ with the extreme
heights in direction $\dir'$, as in
\alglnrefTwo{tilt:closest}{tilt:extremal}.
Consider the lines between segments connecting~$h_1$ to $h'_2$ and
$h_2$ to $h_1'$; they intersect at some
point $i$ that is at least as close to the left as the leftmost non-zero intersection
of all linear interpolations of point heights with respect to directions $\dir$
and $\dir'$ (see
\figref{interpolation}). Let $t_*$ denote the first coordinate of $i$ as on \alglnref{tilt:tstar}.
    Since $\frac{1}{2}t_* < t_*$, segments in the interval $(0,
    \frac{1}{2}t_*]$  of linear interpolations of point heights must not have any crossings,
and so the ordering of points that have unique
    heights with respect to $\dir$ is
    maintained, i.e., we see that $\dir_t$ satisfies \stmtref{tilt-first}.
    Furthermore, if points have the same height with respect to
	direction~$\dir$, they correspond to an intersection at zero of heights
	in the linear
interpolation of point heights, meaning that $\dir_t$
    orders points equivalently to how $\dir'$ orders the points,
    satisfying \stmtref{tilt-second} and \stmtref{tilt-third}.

    To find $h_1, h_2, h_1',$ and $h_2$, we can sort the heights of points in
	directions $\dir$ and~$\dir'$ in $\Theta(\lvert P \rvert \log \lvert P
	\rvert)$ time. Finding the intersection $t_*$ from
    the resulting two segments takes constant time, and returning $\dir_t =
    (1+\frac{1}{2}t_*) \dir + \frac{1}{2}t_* \dir'$ takes
    $\Theta(d)$ time. Thus, the total runtime is $\tiltTimeFull$.
\end{proof}

\paragraph*{Tilting to Pop}
Given a point set $P \subset \R^d$ in general position, two sets $W
\subseteq V \subseteq P$, and a direction $\dir$ that is~$P$-perpendicular to $V$,
\algref{pivot} calculates a direction that is close to
$\dir$ that ``pops'' all of the vertices in~$W$ either entirely above or
entirely below the vertices in $V\setminus W$.
\begin{algorithm}[h!]
    \caption{$\pop{P}{V}{W}{\dir}$}\label{alg:pivot}\label{alg:pivot}
    \begin{algorithmic}[1]
        \REQUIRE $P,V,W$ point sets in $\R^d$ such that
            $P$ is in general position, $W \subseteq V \subseteq P$, and~$\lvert
            V \rvert \leq d$; $\dir$, a direction
            $P$-perpendicular to $V$.

        \ENSURE $\dir_* \in \sph^{d-1}$, a direction that is a $(P, V, W,
        \dir)$-perturbation.
        \STATE $V' \gets \planefill{V}{s}$\label{algln:tiltdown-planefill}
           \COMMENT{2in}{\algref{planefill}}
        \STATE $s' \gets \perpendicular{\aff{V' \cup (V \setminus W) \cup \{W -
        \dir\}}}$
        \hfill    \COMMENT{.8in}{\algref{perp}}
        \label{algln:tiltdown-findperp}
        \IF{$W$ is below $V \setminus W$ with respect to
                $s'$ } \label{algln:tiltdown-ifstmt}
            \STATE $s' \gets -s'$\label{algln:tiltdown-fixdir}
        \ENDIF
        \RETURN $\dir_* = \tiltraw{\dir}{\dir'}{P \cup V'}$
        \hfill  \COMMENT{1.2in}{\algref{tilt}}
             \label{algln:tiltdown-tilt}
    \end{algorithmic}
\end{algorithm}

The algorithm begins in \alglnref{tiltdown-planefill} by finding
a set of points $V' \subset \R^d$ such that
$\aff{V' \cup V}$ is a $(d-1)$-dimensional subspace of $\R^d$ in
$\planefillTimeFull$ time by
\lemref{planefill}.  The additional
points help us control which way to tilt $s$.
In particular, the direction $\dir'$ (computed on \alglnref{tiltdown-findperp}
in $\perpTimeFull$ by \lemref{gs})
is perpendicular to $\aff{V' \cup (V\setminus W)}$, and is the direction
towards which we can tilt in order to ``pop'' $W$ off of the
plane orthogonal to $\dir$ at height $V \setminus W$.
Since there are two
choices for $s'$ in \alglnref{tiltdown-findperp}, the if statement in
\alglnrefTwo{tiltdown-ifstmt}{tiltdown-fixdir} ensures that the direction
is the one such that $W$ is above $V\setminus W$, and this is completed in
$\Theta(d)$ time. Finally, we return $s_*$ on \alglnref{tiltdown-tilt} using
\algref{tilt}, taking $\tiltTimeFull$ by \lemref{tilt}.

\begin{lemma}[Tilting to Pop]\label{lem:tiltdown}
	Let $P$ be a finite point set in $\R^d$ in general position.
    Let $W \subseteq V \subseteq P$
    and $\dir \in \S^{d-1}$ such that $\dir$ is perpendicular to $\aff{V}$.
    Then, \algref{pivot} calculates $\pop{P}{V}{W}{\dir}$
    in $\pivotdownTimeFull{\lvert P \rvert}$ time, and the output is a $(P, V, W,
\dir)$-perturbation.
\end{lemma}

\begin{proof}
    The runtime was justified in the paragraph above detailing the algorithm.
    Recall what it means for the returned direction to be a $(P, V, W,
\dir)$-perturbation:
    \begin{enumerate}[(i)]
        \item The points in $W$ are above $V \setminus W$ with respect to
            direction returned.\label{stmt:tiltdown-popup}\label{stmt:tiltdown-firstprop}
        \item For all $p \in P \setminus V$, $p$ is
            strictly above (below)
            the height of $V \setminus W$ with respect to the direction returned
            if and only if it is strictly above (below, respectively) $V$
            with respect to $\dir$.\label{stmt:tiltdown-saveorder}
        \item The direction returned is $P$-perpendicular to ${V\setminus
            W}$.\label{stmt:tiltdown-perp}\label{stmt:tiltdown-lastprop}
    \end{enumerate}

    Before proving \stmtsref{tiltdown-firstprop}{tiltdown-lastprop}, we
    establish three properties of $\dir'$ returned on
    \alglnref{tiltdown-findperp}. First, since $\aff{V' \cup (V \setminus W)}$ has
    codimension one and the points are in general position, $\dir'$ is
    automatically $P$-perpendicular to this space.

   	Next, we show that all
    vertices of $W$ have the same height with respect to $\dir'$.  Suppose
	$w_1, w_2 \in W$. Since $(w_1-s), (w_2-s) \in \aff{V' \cup (V \setminus
	W) \cup \{ W-s \}}$ we have $\dir' \cdot (w_1 - s) = \dir' \cdot (w_2 -
	s)$. Then by adding $\dir' \cdot s$ to both sides, we obtain $\dir'
	\cdot w_1 = \dir' \cdot w_2$, i.e., all points of $W$ are at the same
	height with respect to $\dir'$.

    Finally, we show $W \not \in \aff{V' \cup (V \setminus W) \cup \{ W-s \}}$.
    Suppose not.  Then, since both $W$ and $W - s$ are in $\aff{V' \cup (V
    \setminus W) \cup \{ W-s \}}$, the vector $\dir$ is parallel to $\aff{V'
    \cup (V \setminus    W) \cup \{ W-s \}}$. Furthermore, since both planes
    contain the same linearly independent set of $(d-1)$ points, $V' \cup V$, we
    have $\aff{V' \cup (V \setminus W) \cup \{ W-s \}} = \aff{V' \cup V}$.  But
    then $s$ is parallel to $\aff{V' \cup V}$, contradicting the property that
    $\dir$ is $P$-perpendicular to $\aff{V' \cup V}$. Thus, we see that $\dir'$
    orders all vertices of $W$ on the same side of $V \setminus W$; on
    \alglnref{tiltdown-fixdir}, we ensure that all vertices of $W$ are
    specifically above $V \setminus W$ with respect to $\dir'$, therefore, by
    \lemstmtref{tilt}{second}, all points of $W$ are above the points of
	$V \setminus W$ with
    respect to the direction $\dir_*$ returned on \alglnref{tiltdown-tilt} and
    we have shown $\dir_*$ satisfies \stmtref{tiltdown-popup}.

    We are now ready to show $\dir_*$ satisfies
    \stmtrefTwo{tiltdown-saveorder}{tiltdown-perp}. Since both
    $\dir$ and $\dir'$ are perpendicular to $\aff{V \setminus W}$, the output
    $\tiltraw{\dir}{\dir'}{P \cup V'} = \dir_*$ is perpendicular to $\aff{V\setminus W}$
    by \lemstmtref{tilt}{third}.  To show that no other vertex of $P$ is at the
    same height of vertices in $V \setminus W$ with respect to $\dir_*$,
    let $p \in P \setminus (V \setminus W)$. If $p \in P \setminus V$, since~$p$
    is strictly above or below $V \setminus W$ with respect to direction $\dir$,
    then by \lemstmtref{tilt}{first},~$p$ is strictly above or below $V
    \setminus W$ with respect to $\dir_*$, showing
    \stmtref{tiltdown-saveorder}.  If $p \in W$, then it is at the same height
    as $V \setminus W$ in direction $\dir$ and above $V \setminus W$ in
    direction~$\dir'$, thus, by \lemstmtref{tilt}{second}, $p$ is above $V
    \setminus W$ in direction $\dir_*$ and we have shown that~$\dir_*$
    is $P$-perpendicular to $V \setminus W$, so
    \stmtref{tiltdown-perp} of the current lemma is satisfied, concluding our proof.
\end{proof}

\subsection{Building an Explicit Set}
\label{sec:explicit}
Next, we use the algorithms of \secref{aux} to
construct directions in \thmref{explicit} for a given simplicial complex.

\begin{construction}[Constructing a Faithful Set]\label{con:explicit}
    Let $\simComp$ be a simplicial
    complex GP-immersed in~$\R^d$ such that $\dim(\simComp) < d$,  and
    let $S$ be the following set of directions constructed iteratively as
    follows:
    \begin{enumerate}[Step 1]
        \item Initially, let $S$ be the standard basis vectors ($e_1$, $e_2$,
            \ldots, $e_d$), plus the additional direction $\pointiso{K_0}$.
            \label{step:verts}
        \item For every maximal $\sigma \in \simComp$,
            \label{step:higher}
            \begin{enumerate}[a. ref=\theenumi{} a.]
                \item Let $\dir = \perpendicular{\sigma}$ and let
                    $H$ be the set of all vertices with the same height as
                    $\sigma$ in direction~$\dir$, and let
                    $W = \planefill{H}{\dir}
                    \cup (H \setminus \sigma)$ and $V = W \cup \sigma$.
                    Then, add the direction~$\dir_{\sigma}:= \pop{\simComp_0\cup
                    W}{V}{W}{\dir}$ to $S$.\label{step:substep:perp}
                \item For each $\tau \in \simComp$ such that $\tau \prec
                    \sigma$, add the direction
                    $\pop{\simComp_0}{\sigma}{\tau}{\dir_{\sigma}}$ to $S$.
                    \label{step:substep:pivot}
            \end{enumerate}
    \end{enumerate}
\end{construction}

The remainder of this section shows that \conref{explicit} forms a faithful
discretization.

\subsubsection{Directions for Vertices}\label{sssec:vertdirs}
Constructing a set of directions that faithfully represents a vertex set has
been explored in previous work. By~\cite[Lemma 7]{belton2019reconstructing}, it
suffices to
construct a set of $d$ linearly independent directions, plus one additional
direction so that there are exactly $n_0$ intersections of size $d+1$ among
all associated filtration hyperplanes.
However, the construction of this final direction given in~\cite[Lemma
8]{belton2019reconstructing} requires stricter general position assumptions to
construct the set, namely, that no two vertices share any $e_i$-coordinate for
$1 \leq i \leq d$. Here, we provide an algorithm to produce such a $(d+1)$st
direction when our pointset satisfies only the mild general position assumptions
described in the ``General Position'' paragraph of
\ssecref{tda-defs}.

\begin{algorithm}[!htbp]
    \caption{$\pointiso{P}$}\label{alg:pointiso}
    \begin{algorithmic}[1]
        \REQUIRE $P \subset \R^d$, a point set in general position
        \ENSURE $\dir \in \sph^{d-1}$, a direction that uniquely orders 
         the filtration grid of $P$ with respect to $\{e_1, e_2, \ldots, e_d\}$
         \STATE $A \gets$ the filtration grid of $P$ with respect to $\{e_1,
         e_2, \ldots, e_d\}$
         \label{algln:pointiso:defineA}
        \STATE $\dir \gets e_1$
        \label{algln:pointiso:dirinit}
        \FOR {$i$ from $2$ to $d$}
        \label{algln:pointiso:forBegin}
        \STATE $\dir \gets \tiltraw{\dir}{e_i}{A}$
        \label{algln:pointiso:dirassign}
        \ENDFOR \label{algln:pointiso:forEnd}
        \RETURN $\dir$\label{algln:pointiso:return}
    \end{algorithmic}
\end{algorithm}

The next lemma proves correctness of \algref{pointiso}.

\begin{lemma}[Correctness of \algref{pointiso}]\label{lem:pointiso}
    Let $P$ be a finite point set in~$\R^d$ in general position. Then,
    $\pointiso{P}$ returns a direction that uniquely orders the 
    filtration grid of $P$ with respect to $\{e_1, e_2, \ldots, e_d\}$
    and runs in~$\pointisoTimeFull{\vert P \vert}$
    time.
\end{lemma}

\begin{proof}
    First, we analyze runtime. On \alglnref{pointiso:dirassign}, we call
    \algref{tilt}, which by \lemref{tilt} takes $\Theta(\vert A\vert \log \vert
    A\vert + d)$ time. Since this is called $d-1$ times during the loop
    on \alglnrefRange{pointiso:forBegin}{pointiso:forEnd} and since $\vert
    A\vert = \Theta(\vert P\vert^d)$, the total runtime of \algref{pointiso} is
    $\Theta(d(\vert P \vert^d \log \vert P \vert^d + d)) = \pointisoTimeFull{\vert P \vert}$.

    Next, we show \algref{pointiso} is correct.
    Let $\pi_i$ be the standard projection map onto the $(e_1, e_2, \ldots
    e_i$)-plane.
    As on \alglnref{pointiso:defineA}, let $A$ be the
    filtration grid of $P$ with respect to $\{e_1, e_2, \ldots, e_d\}$
    and note that $A$ is a grid of at most $\vert P \vert^d$
    points. 
    Let $j$ be the number of times the loop has been completed.
    We use the loop invariant that,
    \alglnrefRange{pointiso:forBegin}{pointiso:forEnd}, $\dir$ totally orders 
    the unique points of the image $\pi_{j+1}(A)$.
    We first show this is true before entering the loop. We
    initialize $\dir = e_1$ on \alglnref{pointiso:dirinit}.
    Thus, since $e_1$ totally orders the points of $\pi_1(A)$, the loop
    invariant is satisfied.

    Let $\dir^i$ denote the $i$th value of $\dir$ in \algref{pointiso} (so $s^1$ is the initial
    direction defined in \alglnref{pointiso:dirinit}, $s^2$ is the direction updated by tilting towards
    $e_2$ the first time we encounter \alglnref{pointiso:forBegin}, etc. Note
    that this means $j=i-1$.)

    Suppose that the loop invariant is true going into the for loop of
    \alglnrefRange{pointiso:forBegin}{pointiso:forEnd}.
    Recall by \lemref{tilt} that $\tiltraw{\dir^{i-1}}{e_{i}}{A}$ produces a
    direction $\dir^{i}$ so that, for all $a_1, a_2 \in A$,
    \begin{enumerate}[(i)]
	\item If $a_1$ is strictly above (below) $a_2$ with respect to direction
            $\dir^{i-1}$,
	    then $a_1$ is strictly above (below, respectively) $a_2$ with
            respect to direction $\dir^{i}$.
            \label{stmt:tiltrepeat-first}
        \item If $a_1$ and $a_2$ are at the same height with respect to
            direction~$\dir^{i-1}$ and $a_1$ is strictly above (below) $a_2$ with
            respect to direction~$e_{i}$, then $a_1$ is strictly above
            (respectively, below) $a_2$ with respect to direction $\dir^{i}$.
            \label{stmt:tiltrepeat-second}
        \item If $a_1$ is at the same height as $a_2$ with respect to both
            directions $\dir^{i-1}$ and $e_{i}$, then $a_1$ and $a_2$ are at the
            same height with respect to direction $\dir^{i}$.
            \label{stmt:tiltrepeat-third}
    \end{enumerate}
    
    Since $\dir^{i-1}$ provided a total order of $\pi_{i-1}(A)$ by assumption
    and given the statements above, we conclude that $\dir^i$ totally orders
    $\pi_i(A)$.
    Suppose that after the loop terminates, $s^d = \pointiso{P}$, totally orders
    the points of $\pi_d(A)$. Then, since~$\pi_d(A) = A$, the final direction
    totally orders the points of $A$.  Finally, by the runtime analysis, the
    loop terminates and thus, \algref{pointiso} is correct.
\end{proof}

Using the previous lemma, we are now able to construct a set of directions that
represent the vertex set of a simplicial complex. This is a generalization
of~\cite[Lemma 7 and Theorem 9]{belton2019reconstructing} and we give a brief
restatement of the main idea of the proof.

\begin{lemma}[Construction of \stepref{verts} Directions and Vertex
    Reconstruction]\label{lem:vertdirs}
    Let $K \subset \R^d$ be a GP-immersed simplicial complex. Then the basis
    directions $e_1, e_2, \ldots, e_d$, along with $\dir = \pointiso{K_0}$,
    are vertex-isolating.
\end{lemma}
\begin{proof}
    Let $A$ denote the
    filtration grid of $K_0$ with respect to $\{e_1, e_2, \ldots, e_d\}$.
    Since~$\dir$ orders the points
    of $A$ uniquely by \lemref{pointiso}, we know by~\cite[Lemma
    7]{belton2019reconstructing} that the vertices $K_0$ are in one-to-one
    correspondence with the points $\pLines{\dir}{K_0} \cap A$. Briefly, this
    is because $\pLines{s}{K_0}$ has a unique filtration hyperplane passing
    through each point of $K_0$.  Then, since each point of $K_0$ lies on some point
    of $A$ , and since no hyperplane of
    $\pLines{s}{K_0}$ passes more than one point of $A$, we have $
    \pLines{\dir}{K_0} \cap A = K_0$. \anna{figure?}
\end{proof}

\subsection{Directions for Higher-Dimensional Simplices}\label{sssec:higherdirs}
Next, we show how auxiliary constructions of \secref{aux} can be used to
construct sets of directions that faithfully represent all higher-dimensional
simplices.
If a simplex $\sigma$ is less than $(d-1)$-dimensional,
the direction returned by $\perpendicular{\sigma}$ is not guaranteed to be
$\simComp_0$-perpendicular to
$\sigma$. That is, other vertices may have the same height as
$\sigma$ with respect to this direction. Thus, to ensure we have a direction
that places $\sigma$ at a unique height, we ``pop'' off any extra vertices using
$\algoName{TiltToPop}$,
returning a tilted direction that is guaranteed to be $K_0$-perpendicular to
$\sigma$.

\begin{lemma}[Construction of \substepref{higher}{perp} Directions]\label{lem:explicitperp}
    Let $\simComp$ be a simplicial complex GP-immersed in~$\R^d$.
    Let $\sigma$ be a maximal simplex of $\simComp$.
    Furthermore, let $\dir =
    \perpendicular{\sigma}$, $H$ denote the set of all vertices with the
    same height as $\sigma$ in direction $\dir$, $W = \planefill{H}{\dir}
    \cup (H \setminus \sigma)$, and $V = W \cup \sigma$. Then
    $\pop{\simComp_0\cup W}{V}{W}{\dir}$ is $\simComp_0$-perpendicular to
    $\sigma$. Furthermore, this
    direction can be computed in $\explicitperpTimeFull$.
\end{lemma}

\begin{proof}
    We first assert that the inputs of $\pop{\simComp_0 \cup W}{V}{W}{\dir}$ are
    valid; note that, by construction, $V$ contains $d-1$ points in general
    position, meaning that no other
    points have the same height as $V$ with respect to $\dir$. Thus, $\dir$ is
    $(\simComp_0 \cup W)$-perpendicular to $V$. Also by construction, we have $W
    \subseteq V \subseteq \simComp_0 \cup W$.  Then, by \lemref{tiltdown}, the
    direction returned by $\pop{\simComp_0}{V}{W}{\dir}$ is a $(\simComp_0, V,
    W, \dir)$-perturbation.  In particular, by \stmtref{perturbation-perp} of
    \defref{perturbation}, this means the direction is
    $\simComp_0$-perpendicular to $V \setminus W = \sigma$, as was desired.

    Finding the set $H$ takes $\Theta(n_0)$ time.
    We compute $\dir = \perpendicular{\sigma}$ in $\perpTimeFull$ time by
    \lemref{gs}. We
    compute $W = \planefill{H}{\dir} \cup (H \setminus \sigma)$,
    in $\planefillTimeFull$ time by \lemref{planefill}.
    Finally, we compute $\pop{\simComp_0\cup W}{V}{W}{\dir}$.
    $\algoName{TiltToPop}$ has runtime
    $\pivotdownTimeFull{(n_0+\lvert W \rvert)}$ by \lemref{tiltdown}. Since
    $W$ has size $O(d)$,  all these operations in total take
    $\explicitperpTimeFull$~time.
\end{proof}

Finally, we show that \conref{explicit} indeed forms a faithful discretization
and analyze the size and time complexity.
Notice that we have a set of requirements that leads to a set of
directions. It is possible that the set of directions has a smaller
cardinality than the set of requirements, namely, if a direction satisfies
two or more requirements. However, since we are considering an upper bound,
we count directions assuming each direction satisfies just one requirement.

\begin{restatable}[Explicit Faithful Discretization]{theorem}{explicit}\label{thm:explicit}
    Let $\simComp$ be a simplicial
    complex GP-immersed in~$\R^d$ such that $\dim(\simComp) < d$,  and
    let $S$ be the set of directions from \conref{explicit}:
    Then, $\apht{K}{S}$ is a faithful discretization of size
    $\dirsetSizeFull$, and $S$
    can be computed in $\dirsetTimeFull$ time.
\end{restatable}

\begin{proof}
    By \lemref{pointiso}, the directions added to $S$ in
    \stepref{verts} are vertex-isolating (\defref{vertiso}). 
    By \lemref{explicitperp}, the directions added to $S$ in
    \substepref{higher}{perp} are $K_0$-perpendicular to every maximal simplex,
    satisfying the first condition of being simplex-isolating
    (\defpartref{simplexiso}{perp}).  By \lemref{tiltdown}, the directions added
    to $S$ in \substepref{higher}{pivot} are $(P, V, W, \dir)$-perturbations,
    satisfying the second condition of being simplex-isolating (
    \defpartref{simplexiso}{tiltdown}).  Thus, the directions of
    \stepref{higher} are simplex-isolating.  Since $(K,S)$ is both vertex- and
    simplex-isolating, by \thmref{main}, $(K,S)$ is a faithful discretization.

    Now, we analyze size and time bounds. 
    By \lemref{pointiso}, the $\Theta(d)$ directions added to $S$ in
    \stepref{verts} can be computed in
    time $\pointisoTimeFull{n_0}$. 

    Next, we give bounds for \substepref{higher}{perp}.
    By \lemref{explicitperp}, for each
    maximal~$\sigma \in \simComp$, the direction $\dir_{\sigma}$ in
    \substepref{higher}{perp} can be computed in time~$\explicitperpTimeFull$.
    Since the total number of maximal simplices is $O(n)$, the total time
    computing directions in \substepref{higher}{perp} is
    $O\left(n(\explicitperpTime)\right)$.

    Given $\dir_{\sigma}$ and $\tau \prec \sigma$, by \lemref{tiltdown}, a
    single direction in \substepref{higher}{pivot} can be computed in
    time~$\pivotdownTimeFull{n_0}$.  Since for every maximal $i$-simplex of
    $\simComp$, we compute one direction for each of its proper faces, each
    $i$-simplex adds a total of~$2^{i+1}-2$ directions in
    \substepref{higher}{pivot}. Letting $\kappa = \dim{K}$, the number of
    directions in \substepref{higher}{pivot} is $\Theta(n2^{\kappa})$.  Hence,
    the total time to compute directions in
	\substepref{higher}{pivot} is~$O(n2^\kappa(\pivotdownTime{n_0}))$.

    Thus, the set $S$ has $\dirsetSizeFull$ directions and can be computed in time
    $\dirsetTimeFull$.
    \anna{time could use a double check -- this is updated after Brittany/Anna
    discussion}
\end{proof}

    \section{Stability Results for Faithful Discretizations}
    \label{sec:stability}
In this section, we make observations about the stability of our
discretization.
All proofs are given with reference to the VPHT; stability results for the VBFT
and other verbose dimension-returning transforms, as well as stability results for
the VECFT are given as corollaries.

Many important observations related to stability can be defined in terms of a
stratification of the sphere induced by topological transforms, as has been noted in related
work, e.g.,~\cite{leygonie2019framework, curry2022many}.
Additionally, the strata have nice combinatorial properties that allow for a
sheaf- or cosheaf-theoretic interpretation. The sheaf/cosheaf viewpoint has been
championed by many, e.g.,~\cite{fasy2022persistent,
arya2022sheaf, curry2022many}.
The stability of the entire PHT (using height filtrations) is stated
in~\cite[Theorem 4.2]{arya2022sheaf} in terms of interleaving distance between
sheaves.
Roughly, the stratification is defined by dividing the sphere of directions
into regions (strata) where all directions within a particular stratum induce the same
partial order on vertices (ordered by their height with respect to that
direction) (see \figref{strat}).

\begin{figure}[h!] \centering \includegraphics[width=.6\textwidth]{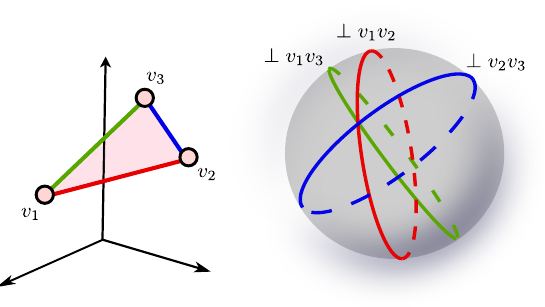}
    \caption{A two-simplex in $\R^3$ (left) stratifies
     $\sph^2$ where each stratum is a region containing all directions that
     define the same partial order on vertices. Notice that the set of
     directions perpendicular to any pair of vertices forms a great circle
     and the two directions perpendicular to $[v_1,v_2,v_3]$ correspond to the
     two three-way intersections of these great circles.
    }
    \label{fig:strat}
\end{figure}

We begin by defining a
strata-preserving map, which allows for a clean way to describe important
relationships between sets of directions.

\begin{definition}[Strata-Preserving Map]
Given a set $S \subseteq \S^{d-1}$ and a simplicial complex $\simComp \subset
    \R^d$, we say that the map $f: \S^{d-1} \to \S^{d-1}$ is
    \emph{strata-preserving} with respect to $K$ if, for every
    $\dir \in S$, the partial ordering of vertices induced by their
    heights with respect to direction $\dir$ and $f(\dir)$ is the same.
\end{definition}
Observing that \thmref{main} gives us flexibility for choosing
which directions parameterize a faithful discretization,
the first stability result shows if we perturb those directions in a
strata-preserving way, we wind up with another faithful
discretization.
\begin{theorem}[Discretization is Robust to Perturbing Parameterization]\label{thm:dirpert}
    Let~$\simComp$ be a simplicial complex GP-immersed in~$\R^d$,
    and let~$S \subset \S^{d-1}$ be such that~$(K,S)$ is vertex- and
	simplex-isolating. If a function~$f \colon \S^{d-1} \to \S^{d-1}$
	satisfies the following conditions:
    \begin{enumerate}[(i)]
        \item there are directions $s_1, s_2, \ldots s_d, s_{d+1} \in S$ so that
            $f(s_1), f(s_2), \ldots, f(s_d)$ are linearly independent and so that
            $f(s_{d+1})$ uniquely orders the filtration grid of $K_0$ with
            respect to $\{f(s_1), f(s_2), \ldots, f(s_d)\}$; and
            \label{stmt:dirpert-keepsb}
        \item the map $f$ is strata-preserving, \label{stmt:dirpert-order}
    \end{enumerate}
    then~$\apht{K}{f(S)}$ is
    a faithful discretization of~$VPHT(\simComp)$.
\end{theorem}

\begin{proof}
    The directions ensured
    by
    \stmtref{dirpert-keepsb} mean that $f(S)$ is vertex-isolating.

    Next, we show that $f(S)$ is simplex-isolating.
    Let $\sigma$ be a maximal simplex.  Since~$(K,S)$ is
    $\sigma$-isolating, $S$ contains a direction $\dir_{\sigma}$ that is
    $K_0$-perpendicular to $\sigma$.  Since
    $f$ is strata-preserving by \stmtref{dirpert-order}, it preserves vertex
    order, so
    $f(\dir_{\sigma})$ is also $K_0$-perpendicular to $\sigma$.  Moreover, if
    $s'$ is a $(P,V,W,s_{\sigma})$-perturbation, then $f(s')$ is a
    $(P,V,W,f(s_{\sigma}))$-perturbation.  Hence, $(K,f(S))$ is $\sigma$-isolating.

    Since $(\simComp,f(S))$ is vertex- and simplex-isolating,
    we have met the assumption of \thmref{main};
    hence,~$\apht{K}{f(S)}$ is a faithful discretization of~$VPHT(\simComp)$.
\end{proof}

The results of \secref{other} give us the following corollary.

\begin{corollary}
    Given a map $f: \sph^{d-1} \to \sph^{d-1}$ as in \thmref{dirpert},
    if $\auggen$ is any verbose dimension-returning transform,
    then
    $\apht{K}{f(S)}$ is a faithful discretization of the
    corresponding topological transform. Notably,
    $\abct{K}{f(S)}$ is a faithful discretization of $VBFT(K)$.
    Furthermore,
    $\aecct{K}{f(S)}$ is a faithful discretization of $VECFT(K)$.
\end{corollary}

Since vertex heights change continuously as a direction changes continuously
(continuous with respect to the standard topologies on $\R$ and $\sph^{d-1}$),
the next stability result shows that we can allow small edits to the embedding:

\begin{theorem}[Parameterization is Robust to
    Vertex Perturbations]\label{thm:perturb}
    Let $\simComp$ be a simplicial complex GP-immersed in~$\R^d$,
    and let $S$ be as in \conref{explicit}. Let $L$ be a complex that
    arises by perturbing the vertices of $\simComp$ in such a way so
    that, for every~$\dir \in S$, the order of vertex heights in $L_0$ with
    respect to $\dir$ is the same order of vertex heights in~$\simComp_0
    $ with respect to $\dir$.
    Then, there exists a set~$S' \subset \S^{d-1}$ that differs from $S$ by at
    most a single direction such that~$\apht{L}{S'}$ is a
    faithful discretization of $VPHT(L)$.
\end{theorem}

\begin{proof}
    Let $g \colon K \to L$ be the simplicial map determined by the perturbation
    described in the theorem. Note
    that this map is a homeomorphism (and hence a bijection).

    The original direction set $S$ is simplex-isolating for $L$.
    By construction, $(K,S)$ is simplex-isolating.
    Now, let $\sigma$ be a maximal simplex.  Since $(K,S)$ is
    $\sigma$-isolating, there is a direction $\dir_{\sigma} \in S$ that
    is~$K_0$-perpendicular to $\sigma$.  Since
    $K$ and $L$ have the same vertex order with respect to direction
    $\dir_{\sigma}$, we know that $\dir_{\sigma}$ is $L_0$-perpendicular to
    $g(\sigma)$.  Moreover, if
    $s'$ is a $(P,V,W,s_{\sigma})$-perturbation, then $s'$ is also a
    $(g(P),g(V),g(W),s_{\sigma})$-perturbation.  Hence,~$(g(K),S)=(L,S)$ is
    $g(\sigma)$-isolating.  Since $g$ is a bijection, $(L,S)$ is
    simplex-isolating.

    It is not guaranteed that the direction set $S$ is vertex-isolating for $L$.
    Suppose that $s_1, s_2, \ldots, s_{d+1} \in S$ are vertex-isolating
    directions for $K$,
    so that $s_1, s_2, \ldots s_d$ are linearly independent, and $s_{d+1}$ orders the
    filtration grid of $K$ with respect to $\{s_1, s_2, \ldots, s_d\}$ uniquely.
    Notice that the filtration grid of $L$ with respect to~$\{s_1, s_2, \ldots,
    s_d\}$ is distinct from the filtration grid of $K$; it is no longer
    guaranteed that $s_{d+1}$ will order this new filtration grid uniquely.
    Thus, we may need to replace~$s_{d+1}$ with a new direction $s_{d+1}'$ that
    \emph{does} order the filtration grid of $L$ with respect to $\{s_1, s_2,
    \ldots, s_d\}$ uniquely.

    Thus, denoting the new direction set we obtain by modifying $S$ with this
    single replacement $S'$, we find that $\apht{L}{S'}$ faithfully discretizes
    $VPHT(L)$, as desired.
\end{proof}

\begin{corollary}
    Given a map $g:K \to L$ as in \thmref{perturb},
    if $\auggen$ is any verbose dimension-returning transform,
    then there exists a set $S' \subset \S^{d-1}$ that differs from $S$ by at
    most a single direction such that~$\descSet{\auggen}{L}{S'}$ is
    a faithful discretization of the
    corresponding topological transform. Notably,
    $\abct{L}{S'}$ is a faithful discretization of $VBFT(L)$.
    Furthermore,
    $\aecct{L}{S'}$ is a faithful discretization of $VECFT(L)$.
\end{corollary}

Note that in the previous theorem and corollary, the set of simplicial complexes
$L$ for which the original direction set $S$ does \emph{not} lead to a faithful
discretization (that is, the change of the single direction is required) has
measure zero.\footnote{with respect to Lebesgue measure.}

    \section{Example of Building a Faithful Set}
    \label{sec:example}
    Here, we walk through a small example of building a faithful discretization of
the VPHT, following \conref{explicit}.
Suppose we are given the simplicial complex~$K$ in~$\R^3$, shown in
\figref{example}.  Specifically,~$K$
comprises: five vertices, $v_0 = (1,0,0)$, $v_1 = (0,1,0)$,
$v_2 = (0,0,1)$,~$v_3=(0,2,0)$, and~$v_4=(0,2,2)$; four edges,~$[v_0,v_1]$,
$[v_1,v_2]$, $[v_0,v_2]$, and $[v_0,v_3]$; and a two-simplex $[v_0,v_1,v_2]$.
There are three maximal simplices,~$v_4$, $\sigma_1=[v_0,v_3]$, and $\sigma_2=[v_0,v_1,v_2]$.
\begin{figure}[h!]
    \centering
    \includegraphics[height=1.5in]{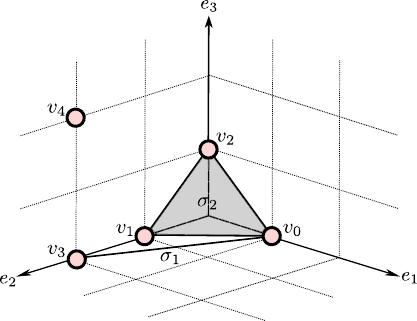}
    \caption{The simplicial complex $K$ used as an example in this section. We
    first find vertex-isolating directions: the basis vectors, along with a
    fourth direction that orders the filtration grid of $K_0$ with respect to
    the basis vectors uniquely. Next, for $\sigma_1$ and
    $\sigma_2$, we find simplex-isolating directions; directions that are
    $K_0$-perpendicular to $\sigma_1$ and $\sigma_2$, as well as directions that
    tilt these perpendicular directions so that proper faces of $\sigma_1$ (and
    $\sigma_2$) are ``popped'' above $\sigma_1$ (respectively, $\sigma_2$).
    }
    \label{fig:example}
\end{figure}

\subsection{Tilt Examples}\label{sec:tiltalgo}
Since many algorithms used in \conref{explicit} make calls to \algref{tilt}
($\algoName{Tilt}$), we walk through two explicit examples in this section,
using inputs that are relevant for later subsections.

We first describe the details of the call $\tiltraw{s}{s'}{P}$ where
$s=(1,0,0)$, $s'=(0,1,0)$ and $P = \{0,1,2\}^3 \subset \R^3$.
We begin by finding the heights of points in $P$ with respect to the
direction $s$ that are closest but distinct (\alglnref{tilt:closest} of
\algref{tilt}). There are only three distinct heights of points of $P$ with
respect to the direction $s$,
namely $0$, $1$, and $2$. Choosing the first pair that satisfies our desired
property, we have $h_1 = 0$ and $h_2 = 1$. Next, on \alglnref{tilt:extremal}, we
find heights of points of $P$ in the direction $s'$ that are extremal, which are
$h_1' = 0$ and $h_2' = 2$. Then, on \alglnref{tilt:tstar}, we compute the
solution to $(1-t) h_1 + t h_2' = (1-t)h_2 + t h_1'$, which is $t_* = 1/3$.
Finally, we return the tilted direction 
\begin{align*}
    s_t = \left(1-\frac{1}{2}t_*\right)e_1+
\frac{1}{2}t_* e_2 = \left(\frac{5}{6}, \frac{1}{6}, 0\right) \approx (0.8333, 0.1667, 0).
\end{align*}

Next, we detail a more involved instance of $\algoName{Tilt}$, using 
$s = (2\sqrt{5}/5, \sqrt{5}/5,
0)$, $s' =  (0.8, 0.4, 0.4472)$, and $P = K_0$.
First, on \alglnref{tilt:closest}, we identify two heights of vertices
with respect to $\dir$ that are closest. The direction $\dir$ orders
vertices in an evenly spaced way~($v_0, v_3,$ and $v_4$ all have
height~$2 \sqrt{5}/5$, $v_1$ has height $\sqrt{5}/5$, and $v_2$ has height~$0$).
Going through the sorted list, we first
encounter~$h_1 = 0$ and $h_2 = \sqrt{5}/5$. On \alglnref{tilt:extremal}, we find
the minimum and maximum heights with respect to $\dir'$. These are $h_1' = 0.4$
(the height of $v_1$ with respect to $\dir'$) and $h_2'
 \approx 1.6944 $ (the height of
$v_4$ with respect to~$\dir'$). Finally, on
\alglnref{tilt:tstar} of \algref{tilt}, we find the solution to 
$(1-t) h_1 + t h_2' = (1-t)h_2 + t h_1'$, which is $t_* \approx 0.2568$
This gives us the final output of $\tiltraw{\dir}{\dir'}{K_0}$,
\begin{align*}
\dir_t = \left(1-\frac{1}{2}t_*\right)\dir + \frac{1}{2}t_* \dir'
\approx (0.8823, 0.4412, 0.0574).
\end{align*}

\subsection{Vertex Isolating Directions}\label{sec:vertexisoexample}
First, we find the directions of \stepref{verts} of \conref{explicit}.
Specifically, we find directions that are vertex-isolating with
respect to~$K$.
By~\lemref{vertdirs}, the standard basis directions, $e_1, e_2, e_3$, as well as
a fourth direction as computed in $\pointiso{K_0}$ (\algref{pointiso}) are vertex-isolating with
respect to~$K$.  Thus, we walk through the computation of \algref{pointiso}.
First, on
\alglnref{pointiso:defineA}, we begin by defining $A$, the filtration grid of
$K_0$ with respect to
$\{e_1, e_2, \ldots, e_d\}$. In our example, $A$ is a set of 27 gridpoints,
$\{0,1,2\}^3 \subset \R^3$.  In \alglnref{pointiso:dirinit}, we initialize $s =
(1,0,0)$, and enter the loop on
\alglnrefRange{pointiso:forBegin}{pointiso:forEnd}. The first iteration calls
$\tiltraw{(1,0,0)}{(0,1,0)}{A}$ (\algref{tilt}), tilting $e_1$ towards $e_2$.
As detailed in \secref{tiltalgo}, this gives us $s^2 = (1-\frac{1}{2}t_*)e_1+
\frac{1}{2}t_* e_2 \approx (0.8333, 0.1667, 0)$.
We are then ready for the next and final iteration of the loop in
\algref{pointiso}, finding the fourth direction in our set of vertex-isolating
directions, which is
\[
s = \tiltraw{s^2}{(0,0,1)}{A} \approx (0.8013, 0.1603,
0.0385).
\]

\subsection{Perpendicular Directions}\label{ssec:perpdirsexample}
Next, we compute directions as described in \substepref{higher}{perp} of
\conref{explicit} (equivalently, directions satisfying
\stmtref{prop:simplexiso-perp} of \defref{simplexiso}).
In the language of \defref{simplexiso}, for $\sigma_i$ ($i = 1,2$) we need a direction
$\dir_{\sigma_i}$ that is~$K_0$-perpendicular to $\sigma_i$.
We begin with the two-simplex, $\sigma_2$, the more straightforward computation.
First, we compute $\perpendicular{\sigma_2}$
(\algref{perp}). In \alglnref{perp:matrix} of \algref{perp}, we find a matrix
$A$ with $v_i - v_0$ in the $i$th column, which, for
$\sigma_2$ is
\begin{equation*}
    A = \begin{bmatrix}
        -1 & -1 \\
        1 & 0 \\
        0 & 1
    \end{bmatrix}.
\end{equation*}
Next, in \alglnref{perp:qn}, we use a QR-decomposition to find the basis vectors for the
space spanned by $A$, which are, $b_1 = (-\sqrt{2}/2,\sqrt{2}/2,0)$ and $b_2 =
(-\sqrt{6}/6, -\sqrt{6}/6, \sqrt{6}/3)$. Then,
in \alglnrefRange{perp:whilebegin}{perp:whileend}, we perform Gram-Schmidt on
$b_1$ and $b_2$ with each basis vector until the result is non-zero. The loop
terminates after one iteration
and we find 
\[
\dir_{\sigma_2} =
\perpendicular{\sigma_2} = (\sqrt{3}/3, \sqrt{3}/3, \sqrt{3}/3) \approx (0.5774,
0.5774, 0.5774).
\]
The
direction $\dir_{\sigma_2}$ is already $K_0$-perpendicular to $\sigma_2$ since
$\sigma_2$ has codimension zero;
we confirm this by completing the procedure suggested in
\substepref{higher}{perp} of \conref{explicit}.
Let $H$ be the set of vertices with the same height as $\sigma_2$ in
direction $\dir_{\sigma_2}$, meaning that $H = \emptyset$. Then, we have $W =
\planefill{H}{\dir} \cup (H \setminus \sigma_2) = \emptyset$
and~$V = W \cup \sigma_2 = \sigma_2$, so
$\pop{K_0}{V}{W}{\dir_{\sigma_2}} =
\pop{K_0}{\sigma_2}{\emptyset}{\dir_{\sigma_2}} = \dir_{\sigma_2}$ (this is
highlighting the fact that, since the output of $\perpendicular{\sigma_2}$ is already
$K_0$-perpendicular to $\sigma_2$, we do not need to adapt it in any way).

Next, we shift focus to $\sigma_1$.
To construct the direction described by \substepref{higher}{perp} of
\conref{explicit} for the edge $\sigma_1$, we again begin by finding
$\perpendicular{\sigma_1}$. This begins with the
column matrix containing $v_3 -
v_0$,
\begin{equation*}
    A_1 = \begin{bmatrix}
         -1 \\
         2 \\
         0
    \end{bmatrix}.
\end{equation*}
Then, we use QR-decomposition to find the basis for the space spanned by~$A_1$,
which consists of the single vector $b = (-\sqrt{5}/5, 2\sqrt{5}/5, 0)$. Performing
Gram-Schmidt on $b$
and $e_1$ yields $\dir = \perpendicular{\sigma_1} = (2\sqrt{5}/5, \sqrt{5}/5,
0)$. However, note that although $\dir$ is perpendicular to $\sigma_1$, it is
not $K_0$-perpendicular, since the vertex $v_4$ is at the same height as
$\sigma_1$ with respect to $\dir$. Thus, we correct $\dir$ by following the
procedure suggested in \conref{explicit}, \stepref{verts}. First, let $H_1$ be the set
of all vertices with the same height as $\sigma_1$ in direction $\dir$, i.e.,
$H_1= \{v_0, v_3, v_4\}$. Since $\aff{H_1}$ has codimension one with $\R^3$, we
find $\planefill{H_1}{\dir} =  \emptyset$, so that
\begin{align*}
    W_1 &= \planefill{H_1}{\dir}\cup
(H_1 \setminus \sigma_1) \\
    &= \emptyset \cup (\{v_0, v_3, v_4\} \setminus \{v_0, v_3\}) \\
    &= \{v_4\}.
\end{align*}
    Finally,
setting $V_1 = W \cup \sigma_1 = \{v_0, v_3, v_4\}$,
we use $\pop{K_0 \cup W_1}{V_1}{W_1}{\dir}$ (\algref{pivot}) to find a direction~$K_0$-perpendicular to
$\sigma_1$. In \alglnref{tiltdown-planefill},
we find~$V' = \planefill{V_1}{\dir}$
but since $\aff{V_1}$ has codimension one with $\R^3$, we simply
have~$V' = \emptyset$.
Next, in \alglnref{tiltdown-findperp}, we compute 
\begin{align*}
    \dir' &= \perpendicular{\aff{V' \cup (V_1 \setminus W_1) \cup \{W_1 -
    \dir\}}} \\
    &= \perpendicular{\aff{v_0, v_3, v_4 - \dir}} \\
    &= \perpendicular{\aff{v_0, v_3, (-2\sqrt{5}/5, 2-\sqrt{5}/5, -2)}} \\
    & \approx (0.8, 0.4, 0.4472).
\end{align*}
Although for this example, $\dir'$ happens to be already
$K_0$ perpendicular to $\sigma_1$, this is not generally the
case. Thus, we continue on \alglnref{tiltdown-tilt} and compute
\begin{align*}
\dir_{\sigma_1} = \tiltraw{\dir}{\dir'}{K_0 \cup V'} =
\tiltraw{\dir}{\dir'}{K_0} \approx (0.8823, 0.4412, 0.0574).
\end{align*}
    (\algref{tilt}, see
\secref{tiltalgo} for details).
which is therefore also the output of
$\pop{K_0 \cup W}{V}{W}{\dir}$, and a direction guaranteed to be $K_0$-perpendicular to
$\sigma_1$.

\subsection{Simplex-Isolating Directions}\label{ssec:simplexisoexample}
Next, we find the directions of \substepref{higher}{pivot} in \conref{explicit}
(directions satisfying
\stmtref{prop:simplexiso-tiltdown} of \defref{simplexiso}). That is, for every
nonempty proper face of $\sigma_1$ and $\sigma_2$, we need a direction that pops
the face above the remaining vertices of the maximal simplex.
We walk through the procedure for a single face of~$\sigma_2$; the remaining
computations are similar and are summarized in \tabref{example}.

Consider the vertex $[v_0] \prec \sigma_2$. By \thmref{explicit}
\stepref{higher}, we can use the output of
$\pop{K_0}{\sigma_2}{v_0}{\dir_{\sigma_2}}$
(\algref{pivot}) to find a direction that pops~$v_0$ above $\sigma_2 \setminus
v_0 = [v_1,v_2]$. First, in \alglnref{tiltdown-planefill}, we find $V' =
\planefill{\sigma_2}{\dir_{\sigma_2}}$ using \algref{planefill}.  Since
$\sigma_2$ has codimension one with~$\R^3$, this is simply $V' = \emptyset$.
Next, on \alglnref{tiltdown-findperp} of \algref{pivot}  we compute
\begin{align*}
    \dir' &=\perpendicular{\aff{V' \cup (\sigma_2 \setminus v_0) \cup \{v_2 -
\dir_{\sigma_2}\}}} \\
    &= \perpendicular{\aff{\{v_1,v_2\} \cup \{v_0 -
\dir_{\sigma_2}\}}} \\
    &\approx (0.9636, 0.1890, 0.1890).
\end{align*}
    Since $v_0$ is
above~$[v_1,v_2]$ with respect to $\dir'$, we do not enter the IF loop on
\alglnref{tiltdown-ifstmt}. Finally, we use \algref{tilt} and compute 
\begin{align*}
\dir =
\tiltraw{\dir_{\sigma_2}}{\dir'}{K_0} \approx (0.6598, 0.4944,
0.4944),
\end{align*}
    which is therefore also the output to
$\pop{K_0}{\sigma_2}{v_0}{\dir_{\sigma_2}}$.

\paragraph{Summary of Computations}
Below is the complete list of computed directions, including those that were
discussed in detail previously in this section.

\begin{table}[h]
    \begin{center}
            \caption{Directions
            that are vertex- and simplex-isolating for the simplicial complex of \figref{example},
            computed as described in \conref{explicit}}\label{tab:example}
            \begin{tabular}{@{}lll@{}}
                \toprule
                Property & Description & Computed Directions \\ && (rounded to four
                decimal places)\\
                \midrule
         \defref{vertiso}
          & vertex-isolating & (1,0,0) \\
          && (0,1,0) \\
          && (0,0,1) \\
          && (0.8013, 0.1603, 0.0385) \\
          \hline
                \stmtref{prop:simplexiso-perp} &
                $K_0$-perpendicular to $\sigma_2$  & $\dir_{\sigma_2} = (0.5774,
                0.5774, 0.5774)$
                \\ of \defref{simplexiso}
                & $K_0$-perpendicular to $\sigma_1$ & $\dir_{\sigma_1} = (0.8823,
          0.4412, 0.0574)$\\
                \hline
                \stmtref{prop:simplexiso-tiltdown} & $(K_0, \sigma_2, [v_0],
                \dir_{\sigma_2})$-perturbation & (0.6598, 0.4944, 0.4944) \\
                of \defref{simplexiso} & $(K_0, \sigma_2, [v_1],
                \dir_{\sigma_2})$-perturbation & (0.5357, 0.6187, 0.5357)  \\
        & $(K_0, \sigma_2, [v_2], \dir_{\sigma_2})$-perturbation & (0.5357,
        0.5357, 0.6187) \\
        & $(K_0, \sigma_2, [v_0,v_1], \dir_{\sigma_2})$-perturbation & (0.5953,
        0.5953, 0.4866) \\
        & $(K_0, \sigma_2, [v_0,v_2], \dir_{\sigma_2})$-perturbation & (0.5959,
        0.4835, 0.5959)  \\
        & $(K_0, \sigma_2, [v_1,v_2], \dir_{\sigma_2})$-perturbation & (0.5235,
        0.5880, 0.5880) \\
        & $(K_0, \sigma_1, [v_0], \dir_{\sigma_2})$-perturbation &  (0.5561,
        0.5560, 0.5866)\\
        & $(K_0, \sigma_1, [v_3], \dir_{\sigma_2})$-perturbation & (0.5649,
        0.5720, 0.5844)  \\
            \botrule
            \end{tabular}
    \end{center}
\end{table}

    \section{Discussion}
    \label{sec:discussion}
In this work, we provide sufficient conditions for a faithful discretization of
the VPHT.
For a simplicial complex GP-immersed in~$\R^d$, with $n_0$ vertices, $n$
simplices, and dimension $\maxd$, the discretization is of size
$\dirsetSizeFull$ and it can be computed in $\dirsetTimeFull$.
We show that the discretization is stable
with respect to multiple types of perturbations.
And, since only the presence and dimension of filtration events are
used (and not birth/death pairing information),
the techniques presented in the paper immediately apply to \emph{any} dimension-returning
transform, such as the verbose Betti function transform.

We show that, under mild adaptations, the methods of this paper also yield a
faithful discretization of the verbose Euler Characteristic function transform.
This is particularly important since, in practice, \aeccs are often preferred to
\apds due to the existence of faster algorithms for computing the functions. 

While the explicit faithful discretizations computed in this paper are the
first of their kind, they generally contain more topological descriptors than
are strictly necessary to form a faithful discretization. In ongoing work, we hope to
better understand computing faithful discretizations with minimal cardinality.

    \todo{go through references to check for any updates to ArXiv papers and any
    incomplete / incorrect references}
    \bibliography{references}

    \appendix
    \section{Proof of Simplex Count Lemma}\label{append:count}
In this appendix, we provide the proof of \lemref{count}. We first recall
the statement of this well-known result.

\simpcount*

\begin{proof}
    Let $f' \colon \simComp \to \N$ be an index filter function compatible with
    $f$, and let $\{F'_i \}_{i \in \N}$ be
    the corresponding index filtration, where $F'_i := f'^{-1}(-\infty,i]$.
    Let $\sigma_1,\sigma_2, \ldots, \sigma_n$ be the ordering of simplices
    in $K$ such that $f'(\sigma_i)=i$.
    Consider the sets
        \begin{equation*}
            C_B := \left\{ (i,j) \in \stdgm{k}{f'} \text{ s.t. } f(\sigma_i) = c
\right\}
        \quad
            C_D := \left\{ (i,j) \in \stdgm{k-1}{f'} \text{ s.t. } f(\sigma_j)=
c \right\}
        \end{equation*}
    and let $C= C_B \cup C_D$.

    We start by defining a bijection $\phi \colon f^{-1}(c) \to C$.
    Let $\sigma_i \in \simComp_k$ such that $f(\sigma_i)=c$.
    Since adding a single $k$-simplex either increases $\beta_k$ by one or
    decreases $\beta_{k-1}$ (see,
    e.g.,~\cite[pp. 120--121]{edelsbrunner2010computational}),
    either
    $\beta_k(F'_{i})=\beta_k(F'_{i-1})+1$ or
    $\beta_{k-1}(F'_{i})=\beta_{k-1}(F'_{i-1})-1$, but not
    both.
    \begin{enumerate}[\hspace{18pt}(P1)]
        \item[Case 1:]  ($\beta_k$ increases). There exists a unique birth at
            index~$i$ in
            $\stdgm{k}{f'}$.  Thus, let $j \in \bar{\R}$ be such that~$(i,j) \in
\stdgm{k}{f'}$.
            Then, $(f'(\sigma_i),f'(\sigma_j))=(i,j) \in C_B$.
            Define $\phi(\sigma_i)=(i,j)$.
        \item[Case 2:]  ($\beta_{k-1}$ decreases). There is a unique death at
            index~$i$
            in~$\stdgm{k}{f'}$. Thus, let $j \in \bar{\R}$ be such that~$(j,i)
\in \stdgm{k}{f'}$.
            Then, $(f'(\sigma_j),f'(\sigma_i))=(j,i) \in C_D$.
            Define $\phi(\sigma_i)=((f'(\sigma_j),f'(\sigma_i))=(j,i)$.
    \end{enumerate}
    In other words, each $\sigma_i$ is mapped to the persistence pair containing
    $f'(\sigma_i)$. If $\phi(\sigma)=\phi(\tau)=(i,j)$, then, by construction,
    both $\sigma$ and $\tau$ are the same type of event. WLOG, assume that
    they are birth events.  Then,~$f'(\sigma)=f'(\tau)$, which, by injectivity
    of $f'$ means that $\sigma=\tau$.
    Thus, we have shown that $\phi$ is an~injection.

    We show that $\phi$ is a surjection by contradiction.
    Suppose there exists $(i,j) \in C$ such
    that there does not exist $\sigma \in f^{-1}(c)$ with $\phi(\sigma)=(i,j)$.
    Since $(i,j)\in C$, then either $f(\sigma_i)=c$ or $f(\sigma_j)=c$ (or
    both). WLOG, suppose $f(\sigma_i)=c$.  Then, we find $\sigma_i \in
	f^{-1}(c)$ and, by construction, $\phi(\sigma_i)=(i,j)\in C_B$.

    By \eqnref{apd-pts} and since $\dgm{}{f}$ is well-defined (see
    \lemref{well-defined}), we know that the map $\psi \colon \stdgm{k}{f'} \to
\dgm{k}{f}$ defined by
    $\psi(i,j)=(f(\sigma_i),f(\sigma_j))$ is a bijection.  Since $C \subset
    \stdgm{k}{f'}$ and both functions are bijections, the composition $\psi
    \circ \phi$ is the one-to-one correspondence that we sought.
\end{proof}
Thus, if some step in a filtration includes a single additional simplex, there
must be a single additional point in the diagram.

\begin{corollary}[Add One Simplex]\label{cor:add-one}
	Let $K_i$ and $K_{i+1}$ be two simplicial complexes such that $K_{i+1} \setminus K_i$ is a single $k$-simplex.
	Let $\beta_k^{(i)}$ and $\beta_k^{(i+1)}$ denote the $k$th Betti numbers of $K_i$ and $K_{i+1}$, respectively.
	Then $\lvert \beta_k^{(i)} - \beta_k^{(i+1)} \rvert = 1$.
\end{corollary}

    \section{Verbose Descriptors are Well-Defined}\label{append:welldef}
The verbose persistence diagram (\apd), verbose Betti function (\abc), and
verbose Euler characteristic function (\aecc) are each well-defined. 

We begin by showing \apds are well defined. This boils down
to proving that \defref{apd} is independent of the choice of the index
filtration.
To prove this claim, we will make use of the following definition, which
organizes collections of $f$-compatible filter functions.

\begin{definition}[Compatible Transposition Sequence]
    Let $\simComp$ be a simplicial complex and let $f \colon K \to \R$
    a filter function. Let $f_1, f_2, \ldots, f_m$ be $f$-compatible index
    filtrations, where each adjacent pair $f_j$ and $f_{j+1}$ differ only by a
    single transposition. That is, for some $i \in \N$, we have the equalities
	$f_j^{-1}(i) = f_{j+1}^{-1}(i+1)$ and~$f_j^{-1}(i+1) =
	f_{j+1}^{-1}(i)$, but for every other $i' \not\in \{i,i+1\}$, we
	have~$f_j^{-1}(i') = f_{j}^{-1}(i')$. This sequence of such index
	filtrations is called an~\emph{$f$-compatible
    transposition sequence}.
\end{definition}

We begin by considering transposition sequences with just two elements.

\begin{lemma}[Commutative Diagrams]
    \label{lem:complexcommute}
    Let $\simComp$ be a simplicial complex and~$f \colon K \to \R$ be a filter
    function.  Let $\{f_1, f_2\}$ be an $f$-compatible transposition
    sequence, where the transposition between $f_1$ and $f_2$ occurs at indices
    $i$ and $i+1$.     Let~$\{K_j^{(1)}\}_{j \in \N}$ and~$\{K_j^{(2)}\}_{j \in
    \N}$ denote the corresponding filtrations. Then,
    the following diagram of inclusions commutes:
    \begin{equation}\label{eqn:bigcommutedgmcplx}
        \small{
        \begin{tikzcd}
            & & & K_i^{(1)}  \arrow[hookrightarrow, rd] & & &\\
            \cdots \arrow[hookrightarrow, r] &
            \begin{tabular}{@{}c@{}c@{}} $K_{i-2}^{(1)}$ \\ \verteq \\ $K_{i-2}^{(2)}$\end{tabular}
            \arrow[hookrightarrow, r] &
            \begin{tabular}{@{}c@{}c@{}} $K_{i-1}^{(1)}$ \\ \verteq \\ $K_{i-1}^{(2)}$\end{tabular}
            \arrow[hookrightarrow, ru] \arrow[hookrightarrow, rd] &  &
            \begin{tabular}{@{}c@{}c@{}} $K_{i+1}^{(1)}$ \\ \verteq \\ $K_{i+1}^{(2)}$\end{tabular}
            \arrow[hookrightarrow, r] &
            \begin{tabular}{@{}c@{}c@{}} $K_{i+2}^{(1)}$ \\ \verteq \\ $K_{i+2}^{(2)}$\end{tabular}
            \arrow[hookrightarrow, r] & \cdots \\
            & & & K_{i}^{(2)} \arrow[hookrightarrow, ru] & & &
        \end{tikzcd}
        }
    \end{equation}
\end{lemma}

\begin{proof}
    The proof follows from the fact that all maps are inclusion maps.
\end{proof}

Applying homology to \eqnref{bigcommutedgmcplx}, we obtain the following corollary. 

\begin{corollary}[Commutative Diagrams on Homology]\label{cor:commutehomology}
    The following diagram commutes:
    \begin{equation}\label{eqn:bigcommutedgmhom}
        {\tiny{
        \begin{tikzcd}
            & & & H(K_i^{(1)})  \arrow[rd] & & &\\
            \cdots \arrow[r] &
            \begin{tabular}{@{}c@{}c@{}}
                $H(K_{i-2}^{(1)})$  \\ \verteq \\ $H(K_{i-2}^{(2)})$
            \end{tabular}
            \arrow[r] &
            \begin{tabular}{@{}c@{}c@{}}
                $H(K_{i-1}^{(1)})$ \\ \verteq \\ $H(K_{i-1}^{(2)})$
            \end{tabular}
            \arrow[ru] \arrow[rd] &  &
            \begin{tabular}{@{}c@{}c@{}}
                $H(K_{i+1}^{(1)})$ \\ \verteq \\ $H(K_{i+1}^{(2)})$
            \end{tabular}
            \arrow[r] &
            \begin{tabular}{@{}c@{}c@{}}
                $H(K_{i+2}^{(1)})$ \\ \verteq \\ $H(K_{i+2}^{(2)})$
            \end{tabular}
            \arrow[r] & \cdots \\
            & & & H(K_{i}^{(2)}) \arrow[ru] & & &
        \end{tikzcd}
    }}
    \end{equation}
\end{corollary}


Suppose that $f_1$ and $f_2$ are $f$-compatible index filtrations that differ
by a single transposition. We show that it does not matter if we compute $f$
using $f_1$ or $f_2$. Following~\cite[p.~122]{vines}, there are four cases to
consider, namely, whether births or deaths occur at index $i$ and $i+1$.  We
first consider the case that $i$ and $i+1$ both correspond to birth events; the
other cases are similar.

\begin{lemma}[Transpose Births]\label{lem:transpose-births}
    Let $\simComp$ be a simplicial complex and $f \colon K \to \R$ be a filter
    function.  Let $f_1$ and $f_2$ be two $f$-compatible
    index filters that differ by exactly one transposition at index
    $i$ and $i+1$.
    If both~$i$ and~$i+1$ are birth coordinates in~$\stdgm{}{f_1}$,
    then 
    $$\left\{ \left(f\left(f_1^{-1}(i)\right), f\left(f_1^{-1}(j)\right) \right)
      \right\}_{(i,j) \in \stdgm{}{f_t}}
      =
      \left\{ \big(f(f_2^{-1}(i)), f(f_2^{-1}(j)) \big)
      \right\}_{(i,j) \in \stdgm{}{f_1}},$$
with $f(\emptyset):=\infty$ and where $=$ denotes equality as a multi-set.
\end{lemma}

\begin{proof}
    For $j \in \{1,2,\ldots, n\}$, let $\sigma_j:= f_1^{-1}(j)$; in other
    words,~$\sigma_1,\sigma_2, \ldots, \sigma_n$ is the ordering of the simplices
    such that $f_1(\sigma_j)=j$.
    Since the transposition is at indices~$i$ and~$i+1$,
    we have~$f_1(\sigma_i)=f_{2}(\sigma_{i+1})=i$
    and~$f_{2}(\sigma_{i+1})=f_1(\sigma_{i+1})=i+1$. For
    all other~$\sigma' \in K$,
    we have~$f_1(\sigma')=f_{2}(\sigma')$.

    By \corref{add-one} and since $i$ is a birth coordinate in~$\stdgm{}{f_1}$, we know
    that~$\rk(H(K_{i}^{(1)})) = \rk(H(K_{i-1}^{(1)})+1$.  Likewise, $\rk(H(K_{i+1}^{(1)}) = \rk(H(K_{i}^{(1)})+1$.
    Thus,
    \begin{equation}
        \rk(H(K_{i+1}^{(2)}) =\rk(H(K_{i+1}^{(1)})) = \rk(H(K_{i-1}^{(1)}))+2
        = \rk(H(K_{i-1}^{(2)}))+2.
    \end{equation}
    Therefore, again by \corref{add-one}, and since $K_{i-1}^{(2)}$ and $K_{i+1}^{(2)}$
    differ by two simplices, we know that both~$i$ and $i+1$ are
    also birth coordinates in~$\stdgm{}{f_2}$.

    For $s,t \in \Rbar$ with $s \leq t$,
    the map $i_{s,t}^{(1)} \colon K_s^{(1)} \hookrightarrow K_t^{(1)}$
    induces a map on homology~$H ( K_s^{(1)})  \to
    H (K_t^{(1)})$.  For $k \in \N$, let $\betti{1}{k}{s}{t}$ be the rank of this map restricted
    to the~$k$-th homology group:
    \begin{equation}\label{eqn:numberedbettidef}
        \betti{1}{k}{s}{t}:=\text{rk}\left(  H_k( K_s^{(1)}) \to H_k( K_s^{(1)})
        \right).
    \end{equation}
    We define $\betti{2}{k}{s}{t}$ analogously.
    By using $f_*=f_1$ in \defref{apd} (and recalling the definition of a
    persistence diagram in \eqnref{pd-multiset}),
    the multiplicity of $(s',t')$ in~$\dgm{k}{f}$~is:
	\begin{equation}
		\label{eqn:multiplicitysum}
		\mu_k^{(1)}(s',t')
        = \sum_{\substack{(s,t) \in \Rbar^2 \\ f(f_1^{-1}(s))=s'\\ f(f_1^{-1}(t))=t'}}
        \left( \betti{1}{k}{s}{t-1}
            - \betti{1}{k}{s}{t}
            - \betti{1}{k}{s-1}{t-1}
            + \betti{1}{k}{s-1}{t}
        \right).
	\end{equation}
    Similarly, let $\mu_k^{(2)} \colon \Rbar^2 \to \Z$ map $(s',t')$ to the
    multiplicity of $(s',t')$ in $\dgm{k}{f}$, as defined using $f_*=f_2$ in
    \defref{apd}.
    Now, we must show that
    $\mu_k^{(1)}=\mu_k^{(2)}$ for all $k$.

    We investigate the values of~$\betti{1}{k}{s}{t}$ for different values of
    $s,t$.  Since $\betti{1}{k}{s}{t} = \betti{1}{k}{\floor{s}}{\floor{t}}$, it
    is sufficient to only consider integer values of $s,t$.
    First suppose $s \neq i$ and $t \neq i$.  Then, since
    Diagram~\ref{eqn:bigcommutedgmhom} commutes,
    we know:
    \begin{align*}
        \betti{1}{k}{s}{t}
        &= \rk \left(  H_k( K_s^{(1)}) \to H_k( K_t^{(1)}) \right)\\
        &= \rk \left(  H_k( K_s^{(2)}) \to H_k( K_t^{(2)}) \right)\\
        &= \betti{2}{k}{s}{t}.
    \end{align*}

    Next, consider the case $s=i$.  Since $s=i$ is a birth parameter, there
    exists $d_i \in \Rbar$ such that $(i,d_i) \in \stdgm{}{f_1}$. Similarly,
    let $d_{i+1} \in \Rbar$ such that $(i+1,d_{i+1}) \in \stdgm{}{f_1}$.  Then,
    if $t \geq \max\{d_i, d_{i+1}\}$, or if $t < \min(d_i,d_{i+1})$, or if $k$
    does not equal the dimension of death at $\max\{d_i, d_{i+1}\}$. Then,
    loosely speaking, $K_i^{(1)} \hookrightarrow K_t^{(1)}$ 
    and $K_i^{(2)} \hookrightarrow K_t^{(2)}$ 
    either do not include
    the deaths at $d_i$ and $d_{i+1}$, or they include
    both deaths. Thus,
    \begin{align*}
        \betti{1}{k}{s}{t}
        &= \betti{1}{k}{i}{t}
            & \text{by substitution} \\
        &= \rk \left(  H_k( K_i^{(1)}) \to H_k( K_t^{(1)}) \right)
            & \text{by \eqnref{numberedbettidef}} \\
        &= \rk \left(  H_k( K_s^{(2)}) \to H_k( K_t^{(2)}) \right)\\
        &= \betti{2}{k}{s}{t}.
    \end{align*}
    Now, consider the case that $k$
    equals the dimension of death at $\max\{d_i, d_{i+1}\}$,
    and a value $t \in \N$ such that $t \in [\min\{d_i,d_{i+1}\},
    \max\{d_i, d_{i+1}\})$.
    First, suppose $d_i < d_{i+1}$, so that, loosely speaking, $K_i^{(1)}
    \hookrightarrow K_t^{(1)}$ includes the death at $d_i$ (but not at
    $d_{i+1}$), but $K_i^{(2)} \hookrightarrow K_t^{(2)}$ does not include the
    death at $d_{i}$ (nor at $d_{i+1}$).
    Then,
    \begin{align*}
        \betti{1}{k}{i}{t}
        &= \rk \left(  H_k( K_i^{(1)}) \to H_k( K_t^{(1)}) \right)\\
        &= \rk \left(  H_k( K_i^{(2)}) \to H_k( K_t^{(2)}) \right) + 1 \\
        &= \betti{2}{k}{i}{t} + 1
    \end{align*}
	If $k$ equals the dimension of death at $\max\{d_i, d_{i+1}\}$, and $t
	\in \N$ such that $t \in [\min\{d_i,d_{i+1}\}, \max\{d_i, d_{i+1}\})$,
	but instead, we have $d_i > d_{i+1}$, a symmetric argument shows
    \begin{align*}
        \betti{1}{k}{i}{t}
        &= \betti{2}{k}{i}{t} - 1.
    \end{align*}

    Now that we have determined all values of $\betti{1}{k}{s}{t}$ and
    $\betti{2}{k}{s}{t}$, we evaluate the sum in \eqnref{multiplicitysum}.
    Since $\betti{1}{k}{s}{t} = \betti{2}{k}{s}{t}$ whenever $k$ does not
    equal the dimension of death at $\max\{d_i, d_{i+1}\}$, we see 
    $\mu_k^{(1)}(s', t') = \mu_k^{(2)}(s',t')$ for such choices of $k$ and
    for all $(s', t')\in \Rbar^2$.
    
    Suppose instead that $k$ equals the dimension of death at $\max\{d_i,
    d_{i+1}\}$. Note that since $f_1$ and $f_2$ are compatible with $f$, we
    have $f(f^{-1}_1)=f(f^{-1}_2)$. Furthermore, the sets $S = \{s \text{
	    s.t.  }f(f_1^{-1}(s)) = f(f_2^{-1}(s)) = s'\}$ and $T = \{t
    \text{ s.t.  }f(f_1^{-1}(t)) = f(f_2^{-1}(t)) = t'\}$ are each sets of
    consecutive integers (with the exception of the possible set $T =
    \{\infty\}$). We label elements
    $S = \{s_1, s_2 \ldots, s_r\}$, and write $s_0 = s_1-1$. By a
    telescoping argument, for any fixed $t \in T$, we find 
    \begin{align*}
	    \sum_{s \in S} &\left( \betti{1}{k}{s}{t-1} -
	    \betti{1}{k}{s}{t} - \betti{1}{k}{s-1}{t-1} +
	    \betti{1}{k}{s-1}{t} \right) \\ 
	    &= \left( \betti{1}{k}{s_1}{t-1} -
	    \betti{1}{k}{s_1}{t} - \betti{1}{k}{s_0}{t-1} +
	    \betti{1}{k}{s_0}{t} \right)\\
	    &+ \left( \betti{1}{k}{s_2}{t-1} -
	    \betti{1}{k}{s_2}{t} - \betti{1}{k}{s_1}{t-1} +
	    \betti{1}{k}{s_1}{t} \right)\\
	    & \qquad \vdots \\
	    &+ \left( \betti{1}{k}{s_{r-1}}{t-1} -
	    \betti{1}{k}{s_{r-1}}{t} - \betti{1}{k}{s_{r-2}}{t-1} +
	    \betti{1}{k}{s_{r-2}}{t} \right)\\
	    &+ \left( \betti{1}{k}{s_r}{t-1} -
	    \betti{1}{k}{s_r}{t} - \betti{1}{k}{s_{r-1}}{t-1} +
	    \betti{1}{k}{s_{r-1}}{t} \right)\\
	    &= - \betti{1}{k}{s_0}{t-1} +
	    \betti{1}{k}{s_0}{t} + \betti{1}{k}{s_r}{t-1} -
	    \betti{1}{k}{s_r}{t},
    \end{align*}
    and similarly for $\betti{2}{k}{s}{t}$ when considering $f_2$.
    We will show that these sums for $f_1$ and $f_2$ agree 
    for any choice of $t$.

    Suppose for the fixed $t$ chosen above, $t \geq \max\{d_i, d_{i+1}\}$
    or $t < \min\{d_i,d_{i+1}\}$ and $t-1 \geq \max\{d_i, d_{i+1}\}$ or $t -1
    < \min\{d_i,d_{i+1}\}$. For these values of $t$, we know
\begin{align*}
	-\betti{1}{k}{s_0}{t-1} +
	    \betti{1}{k}{s_0}{t} + \betti{1}{k}{s_r}{t-1} -
	    \betti{1}{k}{s_r}{t} = -\betti{2}{k}{s_0}{t-1} +
	    \betti{2}{k}{s_0}{t} + \betti{2}{k}{s_r}{t-1} -
	    \betti{2}{k}{s_r}{t}.
\end{align*}

    Suppose for the fixed $t$ chosen above, both $t, t-1 \in [\min\{d_i,
    d_{i+1}\}, \max\{d_i, d_{i+1}\})$ and $d_i < d_{i+1}$. Then we know
\begin{align*}
	-\betti{1}{k}{s_0}{t-1} +
	    \betti{1}{k}{s_0}{t} + &\betti{1}{k}{s_r}{t-1} -
	    \betti{1}{k}{s_r}{t} \\
	    &= -(\betti{2}{k}{s_0}{t-1}+1) +
	    (\betti{2}{k}{s_0}{t}+1) + (\betti{2}{k}{s_r}{t-1}+1) -
	    (\betti{2}{k}{s_r}{t}+1) \\
	    &= -\betti{2}{k}{s_0}{t-1} +
	    \betti{2}{k}{s_0}{t} + \betti{2}{k}{s_r}{t-1} -
	    \betti{2}{k}{s_r}{t}.
\end{align*}
For the same $t$ and $t-1$, if $d_i > d_{i+1}$, a nearly identical argument
holds.

If $t  = \min\{d_i,
    d_{i+1}\}$ and $d_i < d_{i+1}$, then we have
\begin{align*}
	-\betti{1}{k}{s_0}{t-1} +
	    \betti{1}{k}{s_0}{t} + &\betti{1}{k}{s_r}{t-1} -
	    \betti{1}{k}{s_r}{t} \\
	    &= -\betti{2}{k}{s_0}{t-1} +
	    (\betti{2}{k}{s_0}{t}+1) + \betti{2}{k}{s_r}{t-1} -
	    (\betti{2}{k}{s_r}{t}+1) \\
	    &= -\betti{2}{k}{s_0}{t-1} +
	    \betti{2}{k}{s_0}{t} + \betti{2}{k}{s_r}{t-1} -
	    \betti{2}{k}{s_r}{t}.
\end{align*}
Again, for the same $t$ and $t-1$, if $d_i > d_{i+1}$, a nearly identical argument
holds.

Note that since we cannot have $t-1 \in [\min\{d_i, d_{i+1}\}, \max\{d_i,
d_{i+1}\})$ without~$t$ also being in this set, we have evaluated all
possible cases of $t$ and $t-1$.  In each case, the term of the sum
corresponding to \eqnref{multiplicitysum} agreed between $\mu^{(1)}(s',t')$
and $\mu^{(2)}(s',t')$. This, combined with our previous arguments, means we
have shown the equality $\mu^{(1)}(s',t') = \mu^{(2)}(s',t')$, as desired.
\end{proof}

We have, so far, only considered the case that $i$ and $i+1$ correspond to
birth events. The remaining three cases (i.e., that they correspond to death
events, a birth and death event, or a death and birth event) are similar.
All cases taken together give us the following corollary.

\begin{corollary}[Single Transposition]
	\label{cor:sing-transpositions}
    Let $\simComp$ be a simplicial complex, $m \in \N$, and $f \colon K \to \R$ a filter
    function.  Let $f_1$ and $f_2$ be two $f$-compatible
    index functions that differ by exactly one transposition.
    Then:
    $$\left\{ \left(f\left(f_1^{-1}(i)\right), f\left(f_1^{-1}(j)\right) \right)
      \right\}_{(i,j) \in \stdgm{}{f_t}}
      =
      \left\{ \big(f(f_2^{-1}(i)), f(f_2^{-1}(j)) \big)
      \right\}_{(i,j) \in \stdgm{}{f_1}},$$
with $f(\emptyset):=\infty$ and where $=$ denotes equality as a multi-set.
\end{corollary}

Next, we move to transposition sequences of arbitrary finite length, and show
that all index filtrations in a transposition sequence produce the same
verbose persistence diagram. 

\begin{lemma}[Transposition]\label{lem:transpositions}
    Let $\simComp$ be a simplicial complex, $m \in \N$, and $f \colon K \to \R$ a filter
    function.  Let $\{f_1,f_2,\ldots, f_m\}$ be an $f$-compatible
    transposition sequence.
    Then, for all $t \in \{ 1,2, \ldots, m\}$,
    $$\left\{ \left(f\left(f_t^{-1}(i)\right), f\left(f_t^{-1}(j)\right) \right)
      \right\}_{(i,j) \in \stdgm{k}{f_t}}
      =
      \left\{ \big(f(f_1^{-1}(i)), f(f_1^{-1}(j)) \big)
      \right\}_{(i,j) \in \stdgm{k}{f_1}},$$
with $f(\emptyset):=\infty$ and where $=$ denotes equality as a multi-set.
\end{lemma}

\begin{proof}
    For $t \in \{1,2, \ldots m\}$, we write
    \begin{equation}\label{eqn:multiset}
        {A}_t := \left\{ \left(
                f \left({f}_t^{-1}(i)\right),
                f\left({f}_t^{-1}(j)\right) \right)
        \right\}_{(i,j) \in \stdgm{}{f_t}}.
    \end{equation}
    We prove the general claim by induction on $m$, the length of $f$-compatible
    transition sequence.
    The base case is straightforward: Let~$f_1$ be an $f$-compatible index
    filtration. We immediately have~$A_1=A_1$.
    %

    For the inductive assumption, let~$m \geq 1$, and suppose for any
    $f$-compatible transposition sequence, 
    $\{f_1,f_2, \ldots, f_m \}$, we have $A_{m'}=A_1$ for
    all~$m' \in \{1,2, \ldots, m \}$.
    %

    Now, consider $\{f_1,f_2,\ldots, f_{m+1}\}$, an $f$-compatible
    transposition sequence of length $m+1$.
    By the inductive assumption, we know that $A_2 = A_3 = \ldots = A_{m+1}$, so it
    suffices to show that $A_1 = A_{2}$.

    Recall that $f_1$ and $f_{2}$ differ by one transposition, so let $\sigma$
	and~$\tau$ be the simplices transposed and suppose the transposition
	occurs at indexes $i$ and $i+1$.  Indeed, by
	\corref{sing-transpositions}, we have $A_1 = A_2$. Thus, we have shown the claim.
\end{proof}

Using the preceeding lemma, we now show all index filtrations compatible with a
fixed underlying filtration all produce the same VPD, meaning VPDs are
well-defined.

\begin{lemma}[VPD is Well-Defined]\label{lem:well-defined}
    The verbose persistence diagram
    is well-defined.
\end{lemma}

\begin{proof}
    Let $\simComp$ be a simplicial complex, and
    let~$f \colon \simComp \to \R$ be
    a monotonic function.
    Let~$f', f'' \colon \simComp \to \R$
    be two index filters compatible with $f$.
    To show that the \apd is well-defined, we
    show that $f'$ and~$f''$ produce the same \apd.

    Since $f'$ and $f''$ are compatible with $f$ and produce a total order on
    the simplices of $\simComp$, there exists a sequence of
    filters $\{ f_1',f'_2, \ldots, f'_m \}$ such that $f_1'=f'$, $f_m'=f''$, and
    consecutive filter functions agree except for
    one $f$-compatible transposition.
    As in \eqnref{apd-pts} of \defref{apd}, define the following~multisets:
    \begin{linenomath}
    \begin{equation*}
        {A}_t := \left\{ \left(f(\sigma_i), f(\sigma_j) \right)
        \right\}_{(i,j) \in \stdgm{k}{f'_t}}.
    \end{equation*}
    \end{linenomath}
    By \lemref{transpositions}, we know that $A_t=A_1$ for all $t$. In
    particular, $A_m=A_1$.  Thus, both
    $f'$ and $f''$ produce the same verbose persistence diagram, which means
    that VPDs are well-defined.
\end{proof}

To show that \aeccs are well defined, we first observe the following
relationship between descriptors, arising from the connection between homology
and Euler characteristic.

\begin{observation}\label{obs:reduction}
    Given the verbose persistence diagram $\dgm{}{f}$ for some simplicial complex
    $\simComp$ and monotonic function
    $f\colon \simComp \to \R$, we know the
	Betti numbers at every sublevel set of the filtration and can thus
	construct the verbose Betti function $\augbc_{f}$. Furthermore, we can construct
    the verbose Euler characteristic function $\acrv_f$ by
    taking the alternating sum of these Betti numbers.
\end{observation}

Then, well definedness of both the \abc and \aecc are straightforward corollaries of
\lemref{well-defined}; since two distinct index filters compatible with the same
monotonic function produce the same \apd, they produce the same \abc and the same \aecc.

\begin{corollary}[VBF is
    Well-Defined]\label{cor:well-defined-abc}
    Let $\simComp$ be a simplicial complex, and let~$f \colon \simComp \to \R$ be
    a monotonic function.  Then, the verbose Betti function
	$\augbc_{f}$ is well-defined.
\end{corollary}

\begin{corollary}[VECF is
    Well-Defined]\label{cor:well-defined-aecc}
    Let $\simComp$ be a simplicial complex, and let~$f \colon \simComp \to \R$ be
    a monotonic function.  Then, the verbose Euler characteristic function
    $\acrv_f$ is well-defined.
\end{corollary}

    \camera{
    \section{Codimension-Zero Reconstruction}\label{append:simComp}
    \brittany{this section was in a `camera' but also hard-commneted-out.  What is
the status of this section? Do we have something here? Still working on it?
Decided that it was junk?}

Recall that the reconstrcution given in \algref{full-alg},
is proven correct in \thmref{full-reconstruction} in the case that stated for
$\maxd=\dim(\simComp)$ is strictly less than the dimension of the
embedding space, $d$.
Here, we provide a straightforward adaptation to our results that integrates
a parabolic lift that allows us to reconstruct simplicial complexes with $\kappa
= d$ (i.e., when $K$ contains co-dimension zero simplices),
and show that an adapted set faithfully discretizes the VPHT.

We can\todo{remove ambiguous language `can' here and elsewherethroughout}
parabolically lift a simplicial complex
in~$\R^d$ to a simplicial complex in~$\R^{d+1}$ through the map $\mathcal{L}:
\R^d \to \R^{d+1}$ that sends each vertex $v = (\vx,
\vy, \ldots, \vd) \in \simComp$
to  $(\vx, \vy, \ldots, \vd, v \cdot v)$; all higher-dimensional simplices
follow.

To reconstruct $\simComp$,
we follow the same reconstruction for dimensions zero
through~$\maxd -1$. Then, we test for
$\kappa$-simplices using diagrams generated by filtering over a lifted complex,
which we outline next.

We define $\LiftedCertificateName$ similar to $\certificate$,
however, we replace the diagrams generated by directions of \conref{dirset} over
$\simComp$ by diagrams generated over $\mathcal{L}(\simComp)$.
Then, the following algorithm is nearly identical to \algref{full-alg}, but contains
the described modification. Specifically, to identify $d$-simplices, it
makes calls to $\LiftedCertificateName$ rather than
$\certificate$.
\begin{algorithm}[!htbp]\caption{$\codimZeroAlgo$}\label{alg:lifted}
\begin{algorithmic}[h]

    \REQUIRE $\simComp_0$ and $\apht{K}{S}$, where $S$ satisfies the assumptions of
        \thmref{main}

    \ENSURE simplicial complex $\simComp$

        \FOR {$V \subseteq \simComp_0$ with $1 < \lvert V \rvert \leq d$ and in
        non-decreasing size of $V$}
        \STATE $k \gets \lvert V \rvert -1$
        \label{algln:simComp:forDimBegin}
            \IF{ $V \setminus \{v_i\} \in \simComp_{k-1}$ for all $v_i \in V$}
            \label{algln:simComp:boundary}
                    \IF{$\LiftedCertificate{V} = 1$}\label{algln:simComp:testSimp}
                        \STATE Add $V$ to $K_{\lvert V \rvert-1}$
                    \ENDIF \label{algln:simComp:ifEnd}
                    \ENDIF
        \ENDFOR \label{algln:simComp:forDimEnd}

        \RETURN $\simComp_0 \cup \simComp_1 \cup \cdots \cup \simComp_{\kappa}$\label{algln:simComp:return}
    \end{algorithmic}
\end{algorithm}

\begin{theorem}[Codimension-Zero Simplicial Complex Reconstruction]
\label{thm:lifted}
    Let $\simComp$ be a simplicial complex GP-immersed in~$\R^d$, possibly of
    codimension zero. Let $S \subset \S^d$ satisfy the conditions of
    \thmref{main} for $\mathcal{L}(\simComp)$.
    Then, \algref{lifted} reconstructs $\simComp$ using $O(\liftedDGMComp)$
    diagrams. Moreover, such a set can be constructed in $O(\liftedTime)$.
\end{theorem}

\begin{proof}
    The proof is a straightforward adaptation of the proofs found in
    \ssref{representation}. By writing $\kappa = d$ in \eqnref{upperbound} of
    \thmref{full-reconstruction}, we arrive at the stated number of diagrams.
    Again, by making the substitution $\kappa = d$ in \thmref{main-res}, we
    arrive at the time for constructing such a set.
\end{proof}

\camera{
Note that \algref{lifted} could be used in the case that $\dim(\simComp) < d$.
The result would be that in iterations after~$\dim(\simComp)$, the set of
lower-dimensional simplices would be empty and so the algorithm would not evaluate
the inner loops or check for simplices.}

    }

\end{document}